\newcommand{\argmin}{\mathop{\rm arg~min}\limits}
\newtheorem{theorem}{Theorem}[section]
\newtheorem{lemma}[theorem]{Lemma}
\newtheorem{corollary}[theorem]{Corollary}
\newtheorem{claim}[theorem]{Claim}
\newtheorem{observation}[theorem]{Observation}
\newtheorem{example}[theorem]{Example}
\newtheorem{result}[theorem]{Result}
\newcommand{\figcaption}[1]{\def\@captype{figure}\caption{#1}}
\newcommand{\tblcaption}[1]{\def\@captype{table}\caption{#1}}
\title{EFX Allocations for Indivisible Chores: \\ Matching-Based Approach}
\author{Yusuke Kobayashi\thanks{Research Institute for Mathematical Sciences, Kyoto University.
E-mail: yusuke@kurims.kyoto-u.ac.jp}
\and
Ryoga Mahara\thanks{Department of Mathematical Informatics, University of Tokyo.
E-mail: mahara@mist.i.u-tokyo.ac.jp}
\and
Souta Sakamoto\thanks{Research Institute for Mathematical Sciences, Kyoto University.}
}
\date{}
\begin{document}
\maketitle
\begin{abstract}
One of the most important topics in discrete fair division is whether an EFX allocation exists for any instance. 
Although the existence of EFX allocations is a standing open problem for both goods and chores, the understanding of the existence of EFX allocations for chores is less established compared to goods. We study the existence of EFX allocation for chores under the assumption that all agent's cost functions are additive. Specifically, we show the existence of EFX allocations for the following three cases: (i) the number of chores is at most twice the number of agents, (ii) the cost functions of all agents except for one are identical ordering, and (iii) the number of agents is three and each agent has a personalized bi-valued cost function.
Furthermore, we provide a polynomial time algorithm to find an EFX allocation for each case.

\end{abstract}

\section{Introduction}
Fair division theory has significant attention across various fields, including economics, mathematics, and computer science. The classic problem of fairly dividing divisible resources, also known as the cake-cutting problem, dates back to the 1940s \cite{Steinhaus} and has a long history\cite{brams1996fair, robertson1998fair, moulin2004fair, brandt2016handbook}. 
In contrast, the fair allocation of indivisible items has been a topic of active research in recent decades (see surveys \cite{ijcai2022p756, aziz2022algorithmic}).
Given a set $N=\{1,2,\ldots, n\}$ of $n$ agents and a set $M$ of $m$ items, the goal is to allocate $M$ to $N$ in a fair manner.
We refer to items as {\it goods} if they are beneficial, such as cars and smartphones, and as {\it chores} if they are burdens, such as housework and teaching duties.
For the case of goods, each agent $i\in N$ has a valuation function $v_i: 2^M \rightarrow \mathbb{R}_{\ge 0}$, while for the case of chores, each agent $i\in N$ has a cost function $c_i: 2^{M} \rightarrow \mathbb{R}_{\ge 0}$.
In general, $v_i$ and $c_i$ are assumed to be monotone non-decreasing.

One of the most popular and well-studied fairness notions is {\it envy-freeness} (EF)\cite{foley1966resource}.
Informally speaking, an allocation is called EF if each agent prefers their own bundle to be at least as good as that of any other agent. 
In the case of divisible resources, an EF allocation always exists for both goods and chores\cite{aziz2016discrete, dehghani2018envy}, while in the case of indivisible items, it may not exist (for example, dividing one item among two agents).
This has motivated researchers to consider relaxing notions of EF, such as EF1, EFX, and other related notions.

\paragraph{Envy-freeness up to {\it any} item (EFX) for goods}
One of the most well-studied relaxed notions of EF is EFX, which was proposed by Caragiannis et al.~\cite{caragiannis2016unreasonable, caragiannis2019unreasonable}.
An allocation $A=(A_1, A_2, \ldots , A_n)$ of goods is called {\it EFX} if for all pairs of agents $i$ and $j$, and for any $g \in A_j$, it holds that $v_i(A_i) \ge v_i(A_j\setminus \{g\})$.
In other words, each agent $i$ prefers their own bundle to be at least as good as the bundle of agent $j$ after the removal of {\it any} good in $j$'s bundle.
EFX is regarded as the best analog of envy-freeness in the discrete fair division: 
Caragiannis et al.~\cite{caragiannis2019envy} remarked that {\it ``Arguably, EFX is the best fairness analog of envy-freeness for indivisible items.''} 
However, the existence of EFX allocations is not well understood, and it is recognized as a significant open problem in the field of fair division. 
Procaccia~\cite{procaccia2020technical} remarked that {\it ``This fundamental and deceptively accessible question (EFX existence) is open. In my view, it is the successor of envy-free cake cutting as fair division’s biggest problem.''}

There has been a significant amount of research to investigate the existence of EFX allocations in various special cases:
For general valuations, i.e., each valuation function $v_i$ is only assumed to be (i) {\it normalized}: $v_i(\emptyset) = 0$ and (ii) {\it monotone}:  $S\subseteq T$ implies $v_i(S) \le v_i(T)$ for any $S,T \subseteq M$, Plaut and Roughgarden~\cite{plaut2020almost} showed that an EFX allocation always exists when there are two agents, or when all agents have identical valuations.
This result was extended to the case where all agents have one of two general valuations~\cite{mahara2021extension}.
In addition, Mahara showed the existence of EFX allocations when $m\le n+3$ in \cite{mahara2021extension}.
Chaudhury et al.~\cite{chaudhury2020efx} showed that an EFX allocation always exists when $n=3$.
This result was extended to the case where all agents have nice-cancelable valuations, which generalize additive valuations~\cite{berger2022almost}.
A further generalization was obtained by Akrami et al.~\cite{akrami2022efx}, who showed the existence of EFX allocations when there are three agents with two general valuations and one MMS-feasible valuation, which generalizes nice-cancelable valuations.
The existence of EFX allocations of goods remains open even when there are four agents with additive valuations.

\paragraph{Envy-freeness up to {\it any} item (EFX) for chores}
We understand even less about the existence of EFX allocations for chores than for goods.
By analogy to goods, EFX for chores can be defined as follows.
An allocation $A=(A_1, A_2, \ldots , A_n)$ of chores is called {\it EFX} if for all pairs of agents $i$ and $j$, and for any $e \in A_i$, it holds that $c_i(A_i\setminus \{e\}) \le c_i(A_j)$.
In other words, each agent $i$ prefers their own bundle to be at least as good as the bundle of agent $j$ after the removal of {\it any} chore in $i$'s bundle.
Chen and Liu~\cite{chen2020fairness} showed the existence of EFX allocations for $n$ agents with identical valuations and cost functions in the case where goods and chores are mixed.
Li et al.~\cite{li2022almost} showed that an EFX allocation for chores always exists when all agents have an identical ordering cost function by using the top-trading envy graph, which is a tool modified from the envy graph.
Gafni et al.~\cite{gafni2021unified} showed the existence of EFX allocations when each agent has a leveled cost function where a larger set of chores is always more burdensome than a smaller set.
Zhou and Wu~\cite{zhou2021approximately} showed a positive result when $n=3$ and the bi-valued instances, in which each agent has at most two cost values on the chores.
Yin and Mehta~\cite{yin2022envy} showed that if two of the three agents' functions have an identical ordering of chores, are additive, and evaluate every non-singleton set of chores as more burdensome than any single chore, then an EFX allocation exists.
The existence of EFX allocations of chores remains open even when there are three agents with additive valuations.
\subsection{Our Results}
We study the existence of EFX allocations of chores for some special cases under the assumption that each agent has an additive cost function.
We show that an EFX allocation always exists in each of the following three cases:
\begin{result}[Theorem~\ref{thm:2n}]\label{res:1}
There exists an EFX allocation of chores when $m\le 2n$ and each agent has an additive cost function.
Moreover, we can find an EFX allocation in polynomial time.
\end{result}

\begin{result}[Theorem~\ref{thm:ordering}]\label{res:2}
There exists an EFX allocation of chores when $n-1$ agents have identical ordering cost functions.
Moreover, we can find an EFX allocation in polynomial time.
\end{result}

\begin{result}[Theorem~\ref{thm:bivalued}]\label{res:3}
There exists an EFX allocation of chores when $n=3$ and each agent has a personalized bi-valued cost function.
Moreover, we can find an EFX allocation in polynomial time.
\end{result}

The first result is the case where the number of chores is small compared to the number of agents.
If $m$ is at most $n$, then there is an obvious EFX allocation (each agent should be allocated at most one chore).
To the best of our knowledge, Result~\ref{res:1} is the first nontrivial result for a small number of chores.
Interestingly, as mentioned before, for the case of goods, positive results are shown only when $m$ is at most $n+3$~\cite{mahara2021extension}.

Result~\ref{res:2} generalizes the result of the case where $n$ agents have identical ordering cost functions in~\cite{li2022almost}.
Informally speaking, an identical ordering means that the agents have the same ordinal preference for the chores.
See Section~\ref{sec:n-1} for the formal definition of identical ordering.
It should be emphasized that, in Result~\ref{res:2}, the remaining agent can have a general cost function.
Note that our result also extends the result in~\cite{yin2022envy}, in which they considered a more restricted case as mentioned above.

In the last result, we consider personalized bi-valued instances, in which 
each agent has two values for chores that may be different.
Thus, personalized bi-valued instances include the bi-valued instances but not vice versa.
Result~\ref{res:3} extends the result in \cite{zhou2021approximately}, where a positive result was shown when $n=3$ and each agent has a bi-valued cost function.
\subsection{Related Work}

\paragraph{Envy-freeness up to {\it one} item (EF1) for goods}
One of the most popular relaxed notions of EF is EF1, which was introduced by Budish~\cite{budish2011combinatorial}.
EF1 requires that each agent $i$ prefers their own bundle to be at least as good as the bundle of agent $j$ after the removal of {\it some} good in $j$'s bundle.
Thus, EF1 is a weaker notion than EFX.
While the existence of EFX allocations remains open in general, an EF1 allocation can be computed in polynomial time for any instance~\cite{lipton2004approximately, caragiannis2019unreasonable}.
There are several studies that find not only EF1, but also efficient (particularly Pareto optimal) allocation.
It is known that the maximum Nash social welfare solution satisfies both EF1 and PO (Pareto optimal)\cite{caragiannis2019unreasonable}.
Barman et al.~\cite{barman2018finding} show that an allocation satisfying both EF1 and PO can be computed in pseudo-polynomial time.
It remains an open problem whether a polynomial-time algorithm exists to find an allocation that satisfies both EF1 and PO.

\paragraph{Envy-freeness up to {\it one} item (EF1) for chores}
By analogy to goods, EF1 for chores can be defined as follows.
An allocation $A=(A_1, A_2, \ldots , A_n)$ of chores is called {\it EF1} if for all pairs of agents $i$ and $j$ with $A_i\neq \emptyset$, and for some $e \in A_i$, it holds that $c_i(A_i\setminus \{e\}) \le c_i(A_j)$.
In other words, each agent $i$ prefers their own bundle to be at least as good as the bundle of agent $j$ after the removal of {\it some} chore in $i$'s bundle.
Bhaskar et al. \cite{bhaskar2021approximate} showed an EF1 allocation of chores always exists and can be computed in polynomial time.
It remains open whether there always exists an allocation that satisfies both EF1 and PO for chores.

\paragraph{Approximate EFX Allocations}
There are several studies on approximate EFX allocations.
In the case of goods, the definition of $\alpha$-EFX is obtained by replacing $v_i(A_i) \ge v_i(A_j\setminus \{g\})$ with $v_i(A_i) \ge \alpha \cdot v_i(A_j\setminus \{g\})$ in the definition of EFX, where $\alpha\in [0,1]$.
It is known that there are $1/2$-EFX allocations for subadditive valuations\cite{plaut2020almost}.
For additive valuations, there are polynomial time algorithms to compute $0.618$-EFX allocations\cite{amanatidis2020multiple, farhadi2021almost}.
As for chores, there has been little research done so far.
In the case of chores, the definition of $\alpha$-EFX is obtained by replacing $c_i(A_i\setminus \{e\}) \le c_i(A_j)$ with $c_i(A_i\setminus \{e\}) \le \alpha \cdot c_i(A_j)$ in the definition of EFX, where $\alpha\ge 1$.
Zhou and Wu\cite{zhou2021approximately} showed that there exists a polynomial time algorithm to compute a $5$-EFX allocation for $3$ agents and a $3n^2$-EFX allocation for $n\ge 4$ agents.
It remains open whether there exist constant approximations of EFX allocation for any number of agents.

\subsection{Organization}
Section~\ref{sec:pre} provides definitions for terminology and notations, defines the EFX-graph, and discusses its basic properties.
In Section~\ref{sec:2n}, it is shown that an EFX allocation exists when the number of chores is at most twice the number of agents.
In Section~\ref{sec:n-1}, we consider the case where the cost functions of all agents except for one are identical ordering.
In Section~\ref{sec:bi-valued}, we consider the case where $n=3$ and each agent has a personalized bi-valued cost function.
\section{Preliminaries}\label{sec:pre}

Let $N=\{1, 2, \ldots, n\}$ be a set of $n$ agents and $M$ be a set of $m$ indivisible chores.
Each agent $i\in N$ has a cost function $c_i: 2^M \rightarrow \mathbb{R}_{\ge 0}$.
We assume that (i) any cost function $c_i$ is {\it normalized}: $c_i(\emptyset) = 0$, (ii) it is {\it monotone}:  $S\subseteq T$ implies $c_i(S) \le c_i(T)$ for any $S,T \subseteq M$, and (iii) it is {\it additive}: $c_i(S)=\sum_{e\in S} c_i({e})$ for any $S\subseteq M$.

To simplify notation, we denote $\{1,\dots,k\}$ by $[k]$.
For any $i\in N$, $e\in M$, and $S\subseteq M$, we write $c_i(e)$ to denote $c_i(\{e\})$, and use $S\setminus e, S\cup e$ to denote $S\setminus \{e\}, S\cup \{e\}$, respectively.

For $M' \subseteq M$, an {\it allocation} $A=(A_1,A_2,\ldots, A_n)$ of $M'$ is an $n$-partition of $M'$, where $A_i\cap A_j = \emptyset$ for all $i$ and $j$ with $i \neq j$, and $\bigcup_{i\in N} A_i= M'$.
In an allocation $A$, $A_i$ is called a {\it bundle} given to agent $i\in N$. 
Given an allocation $A$, we say that agent $i$ {\it envies} agent $j$ if $c_i(A_i)> c_i(A_j)$, and agent $i$ {\it strongly envies} agent $j$ if there exists a chore $e$ in $A_i$ such that $c_i(A_i\setminus e) > c_i(A_j)$.
An allocation $A$ is called {\it EFX} if no agent strongly envies another, i.e., for any pair of agents $i, j\in N$ and $e\in A_i$, $c_i(A_i\setminus e)\le c_i(A_j)$.
It is easy to see that an allocation $A$ is EFX if and only if for any agent $i\in N$, we have $\max_{e\in A_i} c_i(A_i\setminus e) \le \min_{j\in [n]} c_i(A_j)$. 

Let $G=(V, E)$ be a graph.
A {\it matching} in $G$ is a set of pairwise disjoint edges of $G$.
A {\it perfect matching} in $G$ is a matching covering all the vertices of $G$.
For a matching $X$ in $G$ and a vertex $v$ incident to an edge in $X$, we write $X(v)$ as the vertex adjacent to $v$ in $X$.
For a subgraph $H$ of $G$, let $V[H]$ denote all vertices in $H$ and $E[H]$ denote all edges in $H$.
Similarly, $V[H]$ and $E[H]$ are defined also for a digraph $H$. 
For finite sets $A$ and $B$, we denote the symmetric difference of $A$ and $B$ as $A\bigtriangleup B=(A\setminus B)\cup (B\setminus A)$.

\paragraph{EFX-graph}
In this paper, we use a bipartite graph called an {\it EFX-graph}, which plays an important role to show Results~\ref{res:1} to \ref{res:3}. 
We now define the EFX-graph and provide its basic properties.

Let $U$ be a set of size $n$ and $M'\subseteq M$.
We say that $A=(A_u)_{u\in U}$ is an {\it allocation to $U$ of $M'$} if it is an $n$-partition of $M'$, where each set is indexed by an element in $U$, i.e., 
$A_u \cap A_{u'} = \emptyset$ for all $u$ and $u'$ with $u\neq u'$, and $\bigcup_{u \in U} A_u = M'$. 
For an allocation $A=(A_u)_{u\in U}$ to $U$, we define a bipartite graph $G_{A}=(N, U; E_A)$ called {\it EFX-graph} as follows.
The vertex set consists of $N$ and $U$, and the edge set $E_A$ is defined by
$$
(i,u) \in E_A \iff \max_{e\in A_u} c_i(A_u\setminus e) \le \min_{k\in U} c_i(A_k)
$$
for any $i\in N$ and $u\in U$.
That is, an edge $(i, u)$ means that agent $i$ can receive $A_u$ without violating the EFX conditions.
We define $E_A^{\min}$ by the set of all edges corresponding to the minimum cost chore set in $A$, i.e., $E_A^{\min}=\{(i,u)\in E_A\mid c_i(A_u)=\min_{k\in U} c_i(A_k), i\in N \}$.
See also Example~\ref{ex:1}.
\begin{example}\label{ex:1}
Let $N=\{1,2,3\}$ and $M=\{e_1, e_2, e_3, e_4, e_5, e_6\}$, and suppose that each cost function is represented as in Table~\ref{tab:examle}.
Consider an allocation $A=(\{e_1, e_2\}, \{e_3, e_4\}, \{e_5, e_6\})$ to $U$.
Then, the corresponding EFX-graph $G_A$ is represented as in Figure~\ref{fig:examle1}.
\end{example}
\begin{figure}[bhtp]
 \begin{tabular}{cc}
 \begin{minipage}{0.5\hsize}
  \begin{center}
  \large
    \begin{tabular}{c|cccccc}
         & $e_1$ & $e_2$ & $e_3$ & $e_4$ & $e_5$ & $e_6$  \\ \hline
         agent 1& 2 & 0 & 5 & 2 & 5 & 2 \\
         agent 2& 2 & 4 & 3 & 3 & 0 & 3 \\
         agent 3& 1 & 1 & 1 & 1 & 1 & 1 \\
         
    \end{tabular}
    \tblcaption{The cost of each agent's chores}
    \label{tab:examle}
   \end{center}
 \end{minipage}
 \hfill
 \begin{minipage}{0.5\hsize}
   \begin{center}
   \includegraphics[width=35mm]{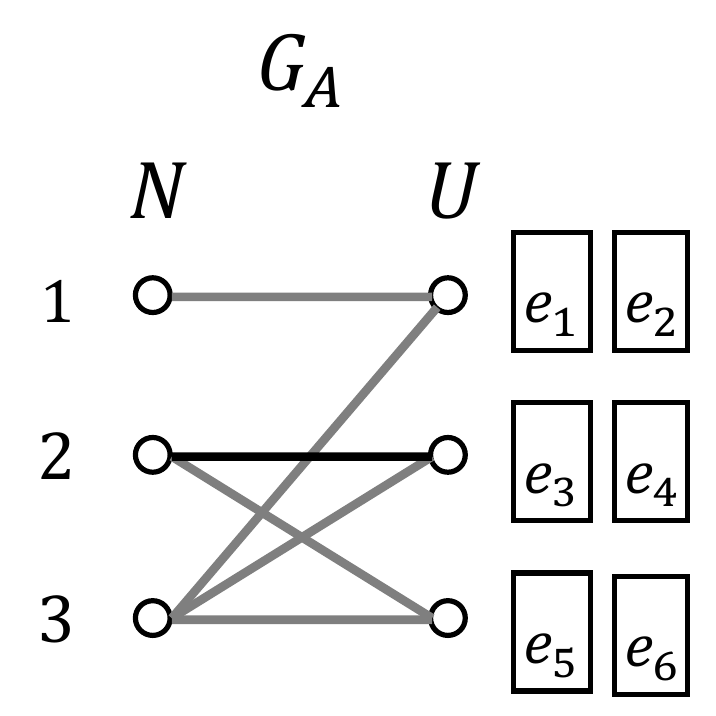}
   \caption{EFX-graph $G_A$. The black and gray edges are in $E_A$, and the gray edges are in $E^{\min}_A$.}
     \label{fig:examle1}
   \end{center}
 \end{minipage}\\
 \end{tabular}
\end{figure}

By simple observation, we see the following properties hold.
\begin{observation}\label{ob:simple}
Let $G_{A}=(N, U; E_A)$ be an EFX-graph for an allocation $A=(A_u)_{u\in U}$.
Then, the following properties hold.
\begin{enumerate}
    \item[(i)] For $u\in U$ with $|A_u|\le 1$, it holds that $(i, u)\in E_A$ for any $i \in N$.
    \item[(ii)] For any $i \in N$, there exists $u \in U$ such that $(i, u)\in E_A^{\min}$.
    \item[(iii)] If $G_A$ has a perfect matching, then $G_A$ has a perfect matching $X$ such that $(i, X(i))\in E^{\min}_{A}$ for some $i\in N$.
\end{enumerate}
\end{observation}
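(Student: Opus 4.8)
The plan is to handle the three parts separately: the first two are direct consequences of the assumed properties of the cost functions, and the third requires a swapping argument along a cycle in a given perfect matching.

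For (i), note that if $|A_u|\le 1$ then $A_u\setminus e=\emptyset$ for every $e\in A_u$, so $\max_{e\in A_u}c_i(A_u\setminus e)=0$ (reading the maximum over an empty set as $0$ in the case $A_u=\emptyset$); since every cost is non-negative, $0\le \min_{k\in U}c_i(A_k)$, hence $(i,u)\in E_A$. For (ii), fix $i\in N$ and choose $u\in\argmin_{k\in U}c_i(A_k)$; monotonicity of $c_i$ gives $c_i(A_u\setminus e)\le c_i(A_u)=\min_{k\in U}c_i(A_k)$ for every $e\in A_u$, so $(i,u)\in E_A$, and by the choice of $u$ this edge lies in $E_A^{\min}$.

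For (iii), I would start from an arbitrary perfect matching $X_0$ of $G_A$. If some edge of $X_0$ already lies in $E_A^{\min}$ we are done, so assume not. By (ii) every agent $i$ has a neighbor $u(i)$ with $(i,u(i))\in E_A^{\min}$, and by our assumption $u(i)\ne X_0(i)$; hence the map $f\colon N\to N$ defined by $f(i)=X_0^{-1}(u(i))$ has no fixed point. Since $N$ is finite, iterating $f$ from any agent yields a directed cycle $i_1\to i_2\to\cdots\to i_k\to i_1$ with $k\ge 2$ and the $i_t$ pairwise distinct. Setting $u_t=u(i_t)$, in $X_0$ the vertex $u_t$ is matched to $f(i_t)=i_{t+1}$ (indices modulo $k$), so replacing the edges $(i_{t+1},u_t)$ of $X_0$ by the edges $(i_t,u_t)$ for $t\in[k]$ re-pairs $\{i_1,\dots,i_k\}$ with $\{u_1,\dots,u_k\}$ using only $E_A^{\min}$-edges and leaves the rest of $X_0$ untouched; the result is a perfect matching $X$ with $(i_1,X(i_1))\in E_A^{\min}$.

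The step I expect to be the only real obstacle is checking that the $u_t$ appearing along the cycle are pairwise distinct, which is exactly what makes the edge replacement produce a genuine (perfect) matching rather than identifying two vertices of $U$. This is short: if $u_t=u_{t'}$ then $i_{t+1}=X_0^{-1}(u_t)=X_0^{-1}(u_{t'})=i_{t'+1}$, contradicting distinctness of the cycle vertices. All remaining assertions — that $X$ is still a matching, covers the same vertices as $X_0$, and consists of $E_A$-edges — are routine bookkeeping.
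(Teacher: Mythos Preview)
Your proof is correct and follows essentially the same approach as the paper. The only cosmetic difference is in part (iii): the paper builds an auxiliary bipartite digraph on $N\cup U$ (arcs $i\to u$ for $E_A^{\min}$-edges and $u\to Y(u)$ for matching edges) and finds a directed cycle there, whereas you compose these two steps into the fixed-point-free map $f\colon N\to N$ and find a cycle of $f$; the resulting alternating cycle and symmetric-difference swap are identical.
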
 
\begin{proof}
    (i) If $|A_u| =0$, the claim is obvious. If $|A_u| = 1$, then $\max_{e\in A_u} c_i(A_u\setminus e) = 0 \le \min_{k\in U} c_i(A_k)$ for any $i\in N$. Thus, we have $(i,u) \in E_A$.

    (ii) For $i\in N$, let $A_u\in \argmin_{k\in U} c_i(A_k)$. Then, we have $(i,u)\in E^{\min}_{A}$.

    (iii) Let $Y$ be any perfect matching in $G_A$. If $(i, Y(i))\in E^{\min}_{A}$ for some $i \in N$, then we are done.
    Suppose that $(i, Y(i))\not\in E^{\min}_{A}$ for any $i\in N$.
    We consider a directed bipartite graph $D_A=(N, U; F)$, where the vertex set consists of $N$ and $U$, and the arc set $F$ is defined by
$$
F=\{(i,u)\mid (i,u)\in E^{\min}_A, i\in N\}\cup \{(u, Y(u))\mid u\in U\}.
$$
Since all vertices in $D_A$ have at least one outgoing arc by (ii), $D_A$ has a directed cycle $\overrightarrow{C}$ of length more than two by our assumption.
Let $C$ be the underlying undirected cycle of $\overrightarrow{C}$.
We define a new perfect matching $X=Y\bigtriangleup E[C]$ in $G_A$.
Then, there exists an edge $(i, X(i))\in E^{\min}_{A}$ for some $i\in N$. 
\end{proof}

It is easy to see that if an EFX-graph has a perfect matching, then an EFX allocation can be obtained as follows.
\begin{observation}\label{ob:final}
Let $M'\subseteq M$ and $A=(A_u)_{u\in U}$ be an allocation to $U$ of $M'$.
Suppose that EFX-graph $G_A$ has a perfect matching.
Then, there exists an EFX allocation of $M'$, and it can be found in polynomial time.
\end{observation}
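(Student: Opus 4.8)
The plan is to construct the EFX allocation directly from a perfect matching of $G_A$ and then verify the EFX condition using the way edges of $G_A$ were defined. First I would invoke Observation~\ref{ob:simple}(iii) (or simply take any perfect matching, though the refined version will be convenient for the later applications) to obtain a perfect matching $X$ in $G_A=(N,U;E_A)$. Since $|N|=|U|=n$, the matching $X$ pairs every agent $i\in N$ with exactly one index $X(i)\in U$, and the map $i\mapsto X(i)$ is a bijection from $N$ to $U$. I would then define the candidate allocation $B=(B_1,\dots,B_n)$ of $M'$ by setting $B_i:=A_{X(i)}$ for each $i\in N$; this is indeed an allocation of $M'$ because $(A_u)_{u\in U}$ is an $n$-partition of $M'$ and $X$ just relabels its parts by agents.

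Next I would check that $B$ is EFX. Fix an agent $i\in N$. Because $(i,X(i))\in E_A$, the defining condition of the EFX-graph gives
$$
\max_{e\in A_{X(i)}} c_i(A_{X(i)}\setminus e)\ \le\ \min_{k\in U} c_i(A_k).
$$
Since $X$ is a bijection from $N$ to $U$, the family $\{A_{X(j)}\}_{j\in N}$ equals $\{A_k\}_{k\in U}$, so $\min_{k\in U} c_i(A_k)=\min_{j\in N} c_i(A_{X(j)})=\min_{j\in N} c_i(B_j)$. Substituting $B_i=A_{X(i)}$, this says $\max_{e\in B_i} c_i(B_i\setminus e)\le \min_{j\in N} c_i(B_j)$ for every $i\in N$, which is exactly the characterization of EFX stated in the preliminaries. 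Hence $B$ is an EFX allocation of $M'$.

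Finally, for the running time I would note that the EFX-graph $G_A$ can be built in polynomial time (for each pair $(i,u)$ one evaluates the additive cost function on at most $|A_u|$ sets, and takes a minimum over $U$), and a perfect matching in a bipartite graph can be found in polynomial time by a standard algorithm such as Hopcroft--Karp; the relabeling step to produce $B$ is trivial. There is essentially no hard obstacle here: the only thing to be careful about is the bookkeeping that turns the index set $U$ into the agent set $N$, i.e.\ that the ``$\min$ over $U$'' appearing in the edge condition coincides with the ``$\min$ over agents'' appearing in the EFX definition once the bijection $X$ is applied. That identification is what makes an edge of $G_A$ precisely certify that the assigned bundle does not cause its recipient to strongly envy anyone.
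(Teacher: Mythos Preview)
Your proof is correct and follows essentially the same approach as the paper: take a perfect matching $X$ in $G_A$, assign $A_{X(i)}$ to agent $i$, and use the defining inequality of $E_A$ together with the bijection $X\colon N\to U$ to verify the EFX characterization $\max_{e\in B_i} c_i(B_i\setminus e)\le \min_{j\in N} c_i(B_j)$. The only (harmless) difference is that you invoke Observation~\ref{ob:simple}(iii) to pick a special matching, whereas the paper simply takes any perfect matching, and you spell out the polynomial-time claim in more detail.
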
 
\begin{proof}
Let $X$ be a perfect matching in $G_A$.
We construct an allocation $A'=(A'_1,\ldots, A'_n)$ to $N$ as follows.
For each agent $i\in N$, $A'_i$ is defined as $A_{X(i)}$, where $A_{X(i)}$ is the chore set corresponding to the vertex matched to $i$ in $X$.
By the definition of $E_A$, it holds that $\max_{e\in A'_i} c_i(A'_i\setminus e)=\max_{e\in A_{X(i)}} c_i(A_{X(i)}\setminus e) \le \min_{k\in U} c_i(A_k)\le \min_{j\in [n]} c_i(A'_j)$ for any $i\in N$. 
Thus, $A'$ is an EFX allocation of $M'$. 
\end{proof}
Let $M'\subsetneq M$ and $A=(A_u)_{u\in U}$ be an allocation to $U$ of $M'$.
Let $e\in M\setminus M'$ be an unallocated chore and $A_v$ be some chore set in $A$.
We say that an allocation $A'=(A'_u)_{u\in U}$ to $U$ of $M'\cup e$ is obtained from $A$ by {\it adding} $e$ to $A_v$ if
\begin{align*}
A'_u &= \left\{ 
\begin{array}{ll}
A_{v}\cup e & {\rm if}~u=v,\\
A_{u} & {\rm otherwise}.
\end{array}
 \right. &
\end{align*}
The following lemma is a fundamental one that will be used repeatedly later.
\begin{lemma}\label{lem:basic}
Let $M'\subsetneq M$ and $A=(A_u)_{u\in U}$ be an allocation to $U$ of $M'$.
Suppose that there exist $i \in N$ and $e\in M\setminus M'$ such that $c_i(e)\le c_i(e')$ for any $e'\in M'$.
Let $(i, u_i)\in E_A^{\min}$ and $A'=(A'_u)_{u\in U}$ be an allocation obtained from $A$ by adding $e$ to $A_{u_i}$.
Then, the following two statements hold.
\begin{enumerate}
\item[(i)] $(i, u_i)\in E_{A'}$. 

\item[(ii)] $(j, u)\in E_A \Rightarrow (j, u)\in E_{A'}$ for any $j\in N$ and $u\in U\setminus u_i$.
\end{enumerate}
\end{lemma}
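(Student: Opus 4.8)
The plan is to verify the two edge conditions of $G_{A'}$ directly from the definition of $E_{A'}$, leaning on two elementary facts. First, adding a chore to a bundle can only weakly increase every agent's cost of that bundle (additivity and nonnegativity of costs), so for \emph{every} agent $j$ we have $\min_{k\in U} c_j(A'_k)\ge \min_{k\in U} c_j(A_k)$. Second, for the distinguished agent $i$ the chore $e$ is a minimum-cost chore among those allocated in $A$, so deleting any single chore from the enlarged bundle $A_{u_i}\cup e$ leaves cost at most $c_i(A_{u_i})$.

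For (i), I would first bound $\max_{e'\in A'_{u_i}} c_i(A'_{u_i}\setminus e')$ by $c_i(A_{u_i})$ via a two-case analysis on $e'\in A'_{u_i}=A_{u_i}\cup e$: if $e'=e$ the quantity is exactly $c_i(A_{u_i})$; if $e'\in A_{u_i}$ then by additivity $c_i(A'_{u_i}\setminus e')=c_i(A_{u_i}\setminus e')+c_i(e)\le c_i(A_{u_i}\setminus e')+c_i(e')=c_i(A_{u_i})$, where the inequality uses $c_i(e)\le c_i(e')$. Next I would lower-bound $\min_{k\in U} c_i(A'_k)$ by $c_i(A_{u_i})$: for $k\neq u_i$ we have $c_i(A'_k)=c_i(A_k)\ge c_i(A_{u_i})$ since $(i,u_i)\in E_A^{\min}$, and for $k=u_i$ we have $c_i(A'_{u_i})=c_i(A_{u_i})+c_i(e)\ge c_i(A_{u_i})$. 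Chaining $\max_{e'\in A'_{u_i}} c_i(A'_{u_i}\setminus e')\le c_i(A_{u_i})\le \min_{k\in U} c_i(A'_k)$ gives $(i,u_i)\in E_{A'}$.

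For (ii), let $j\in N$ and $u\in U\setminus u_i$ with $(j,u)\in E_A$. Since $u\neq u_i$ we have $A'_u=A_u$, so the left-hand side $\max_{e'\in A'_u} c_j(A'_u\setminus e')$ is unchanged from $A$ and is therefore at most $\min_{k\in U} c_j(A_k)$ by $(j,u)\in E_A$. It remains to note $\min_{k\in U} c_j(A'_k)\ge \min_{k\in U} c_j(A_k)$: the only bundle that changed is indexed by $u_i$, and $c_j(A'_{u_i})=c_j(A_{u_i})+c_j(e)\ge c_j(A_{u_i})$, so the minimum cannot drop. Combining yields $(j,u)\in E_{A'}$.

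The arithmetic here is routine; the real content is identifying which hypothesis does the work. For (ii) it is merely monotonicity of the bundle-minimum under adding a chore. For (i) — the step I would be most careful with — the point is that we need $c_i(e)$ to be no larger than the cost of any chore we might shave off from the enlarged bundle $A_{u_i}\cup e$; the hypothesis $c_i(e)\le c_i(e')$ for all $e'\in M'$ guarantees this regardless of which bundle $A_{u_i}$ the edge $(i,u_i)\in E_A^{\min}$ happens to point to, which is why the lemma is phrased with minimality over all of $M'$ rather than just over $A_{u_i}$.
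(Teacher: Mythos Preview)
Your proof is correct and follows essentially the same approach as the paper's: both parts unwind the definition of $E_{A'}$, using that adding $e$ can only raise bundle costs (hence the minimum over $U$ cannot drop) and that $c_i(e)\le c_i(e')$ forces $\max_{f\in A'_{u_i}} c_i(A'_{u_i}\setminus f)=c_i(A_{u_i})$. Your two-case analysis in (i) is just a slightly more explicit version of the paper's one-line assertion that the maximum is attained at $f=e$.
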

\begin{proof}
(i) Since $c_i(e)\le c_i(e')$ for any $e'\in M'$, $\max_{f\in A'_{u_i}} c_i(A'_{u_i} \setminus f)=c_i(A'_{u_i}\setminus e)=c_i(A_{u_i})$.
By the fact that $(i, u_i)\in E_A^{\min}$, $c_i(A_{u_i}) = \min_{k\in U} c_i(A_k) \le  \min_{k\in U} c_i(A'_k)$.
Therefore, $\max_{f\in A'_{u_i}} c_i(A'_{u_i}\setminus f)\le \min_{k\in U} c_i(A'_k)$, which implies $(i, u_i)\in E_{A'}$.

(ii) Fix any $j\in N $ and $u\in U\setminus u_i$ with $(j, u)\in E_A$.
Since $u\in U\setminus u_i$, we have $A'_{u}=A_{u}$.
Thus, $\max_{f\in A'_u} c_j(A'_u\setminus f)=\max_{f\in A_u} c_j(A_u\setminus f) \le \min_{k\in U} c_j(A_k)\le \min_{k\in U} c_j(A'_k)$, where the first inequality follows from $(j,u)\in E_A$.
This implies $(j, u)\in E_{A'}$.
\end{proof}
The following corollary can be obtained by applying Lemma~\ref{lem:basic} for $i$, $e$, and $u_i=X(i)$.
\begin{corollary}\label{cor:1}
Let $M'\subsetneq M$ and $A=(A_u)_{u\in U}$ be an allocation to $U$ of $M'$.
Suppose that $G_A$ has a perfect matching $X$ such that $(i, X(i)) \in E^{\min}_{A}$, and there exists $e\in M\setminus M'$ such that $c_i(e)\le c_i(e')$ for any $e'\in M'$.
Then, $X$ is a perfect matching also in $G_{A'}$, where $A'=(A'_u)_{u \in U}$ is the allocation obtained from $A$ by adding $e$ to $A_{X(i)}$.
\end{corollary}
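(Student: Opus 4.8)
The plan is to derive this directly from Lemma~\ref{lem:basic} by instantiating that lemma with the vertex $u_i := X(i)$. First I would verify that the hypotheses of Lemma~\ref{lem:basic} are exactly the data we are handed here: we have an allocation $A=(A_u)_{u\in U}$ of a proper subset $M'\subsetneq M$, an agent $i\in N$, and an unallocated chore $e\in M\setminus M'$ with $c_i(e)\le c_i(e')$ for every $e'\in M'$. Since $(i,X(i))\in E_A^{\min}$ by assumption, $u_i=X(i)$ is a valid choice of the vertex incident to $i$ in $E_A^{\min}$, and the allocation $A'$ obtained from $A$ by adding $e$ to $A_{X(i)}$ coincides with the allocation $A'$ appearing in the lemma. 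So Lemma~\ref{lem:basic} applies verbatim.

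Next I would check that every edge of $X$ persists in $G_{A'}$, splitting the edges of the matching into two groups. The edge $(i,X(i))$ lies in $E_{A'}$ by part~(i) of Lemma~\ref{lem:basic}. For every other agent $j\in N\setminus i$, the matching edge $(j,X(j))$ lies in $E_A$ because $X$ is a perfect matching in $G_A$; moreover $X(j)\neq X(i)$ since $X$ is a matching and $j\neq i$, so $X(j)\in U\setminus u_i$ and part~(ii) of Lemma~\ref{lem:basic} yields $(j,X(j))\in E_{A'}$. Therefore all $n$ edges of $X$ belong to $E_{A'}$. Finally, since the vertex set $N\cup U$ of the bipartite graph is unchanged when passing from $G_A$ to $G_{A'}$, and $X$ already covers all these vertices, $X$ is a perfect matching in $G_{A'}$ as well.

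I do not expect a genuine obstacle: the corollary is essentially a repackaging of Lemma~\ref{lem:basic} adapted to the setting of a perfect matching. The only point deserving a line of care is the remark that $X(j)\neq X(i)$ for $j\neq i$, which is what lets us invoke part~(ii) (stated only for $u\in U\setminus u_i$) for all non-$i$ agents; this is immediate from $X$ being a matching. Everything else is bookkeeping.
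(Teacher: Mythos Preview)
Your proposal is correct and follows exactly the route the paper takes: the paper simply states that the corollary is obtained by applying Lemma~\ref{lem:basic} with $u_i = X(i)$, and you have spelled out the edge-by-edge verification that this entails. The only thing the paper adds is brevity; your care about $X(j)\neq X(i)$ for $j\neq i$ is the right detail to make part~(ii) applicable.
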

\section{Existence of EFX with at most $2n$ Chores}\label{sec:2n}
In this section, we prove the following theorem by constructing a polynomial-time algorithm to find an EFX allocation when $m\le 2n$.
\begin{theorem}\label{thm:2n}
There exists an EFX allocation of chores when $m\le 2n$ and each agent has an additive cost function.
Moreover, we can find an EFX allocation in polynomial time.
\end{theorem}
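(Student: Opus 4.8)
The plan is an incremental construction built around the EFX-graph. By Observation~\ref{ob:final} it suffices to exhibit an allocation $A=(A_u)_{u\in U}$ to $U$ of the \emph{whole} set $M$ whose EFX-graph $G_A$ has a perfect matching. The case $m\le n$ is immediate: put at most one chore in each bundle, so $G_A$ is complete bipartite by Observation~\ref{ob:simple}(i); assume therefore $n<m\le 2n$. I would fix an ordering $g_1,\dots,g_m$ of the chores by non-increasing value of $\mu(e):=\min_{i\in N}c_i(e)$, and add the chores one at a time in this order, maintaining the invariant that the current partial allocation $A$ of $M'\subseteq M$ has a perfect matching in $G_A$ (the empty allocation trivially satisfies it). At the end $M'=M$ and Observation~\ref{ob:final} converts the matching into an EFX allocation; since there are $m$ steps, each polynomial, the algorithm is polynomial.

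The extension step has two cases. If some bundle is empty, add the next chore to it: enlarging a bundle $A_v$ can only increase each quantity $\min_k c_i(A_k)$, hence preserves every edge of $G_A$ incident to a bundle other than $v$ (cf.\ Lemma~\ref{lem:basic}(ii)), and the new size-one bundle is joined to all of $N$ by Observation~\ref{ob:simple}(i), so the previous perfect matching still works. After at most $n$ such steps every bundle is nonempty and the remaining $m-n\le n$ chores must be inserted into nonempty bundles; this is where the sorted order is used. If $g$ is the next chore and $i^\star\in\argmin_i c_i(g)$, then for every already-allocated chore $g'$ we have $c_{i^\star}(g')\ge\mu(g')\ge\mu(g)=c_{i^\star}(g)$, so $g$ is a cheapest chore of $i^\star$ among all allocated ones. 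Consequently, by Corollary~\ref{cor:1}, \emph{if} $G_A$ admits a perfect matching $X$ with $(i^\star,X(i^\star))\in E^{\min}_A$, then adding $g$ to $A_{X(i^\star)}$ keeps $X$ a perfect matching and the invariant is preserved.

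The main obstacle is exactly the italicized requirement: Observation~\ref{ob:simple}(iii) only guarantees that \emph{some} agent lies on a min-edge of a perfect matching, not the particular agent $i^\star$ we need. I would attack this by an alternating-path/rerouting analysis of $G_A$ in the spirit of the proof of Observation~\ref{ob:simple}(iii): form the auxiliary digraph whose arcs are the min-edges oriented $N\to U$ together with the reverse matching edges $U\to N$; every agent-vertex has out-degree at least one by Observation~\ref{ob:simple}(ii), so following arcs from $i^\star$ reaches a directed cycle, and if that cycle contains $i^\star$ we reroute $X$ along it and put $i^\star$ on a min-edge. The residual case is when no cycle reachable from $i^\star$ contains it; there I would exploit the slack available — by the sorted order $\argmin_i c_i(g_s)$ is an equally good target for \emph{every} still-unallocated chore $g_s$, so one may switch the agent and the chore being inserted — and, failing that, use $m\le 2n$, which forces at least one size-one bundle to persist throughout phase two, to rearrange the already-allocated chores among bundles and expose a suitable perfect matching. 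Making this last case analysis precise, while verifying that no rearrangement destroys the perfect matching or the cost comparison demanded by Corollary~\ref{cor:1}, is the technical heart of the argument; the remainder is bookkeeping.
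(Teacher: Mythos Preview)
Your overall framework (build up $A$ while keeping a perfect matching in $G_A$, then invoke Observation~\ref{ob:final}) matches the paper's, but the heart of the argument---what you call the ``residual case''---is a genuine gap, not just bookkeeping. Concretely, your rerouting idea breaks down exactly where you suspect: following min-arcs and reverse matching arcs out of $i^\star$ yields a walk $i^\star\to u\to X(u)\to\cdots$ that eventually enters a cycle $C$; if $i^\star\notin C$ and you augment along the path, then $i^\star$ acquires a min-edge but its old partner $X(i^\star)$ becomes unmatched, and nothing in your construction gives you an edge to cover it. Your first fallback (``switch the agent and the chore'') does not help either: for an unallocated $g_s$ your inequality $c_{j^\star}(g_s)\le c_{j^\star}(g')$ for allocated $g'$ holds only for $j^\star\in\argmin_i c_i(g_s)$, and there is no reason any such $j^\star$ is easier to place on a min-edge than $i^\star$ was. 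The second fallback (``rearrange the already-allocated chores'') is not an argument; once bundles have size two you cannot freely shuffle contents without destroying the perfect matching you are trying to preserve.

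The paper closes precisely this hole by replacing your global sort on $\mu(e)=\min_i c_i(e)$ with a per-agent choice: agents $l,l-1,\dots,1$ (where $l=m-n$) each pick and hold a cheapest remaining chore $e_i$, the other $n$ chores are placed one per bundle, and then $e_1,\dots,e_l$ are inserted in that order. The stronger invariant maintained is that after round $t$ there is a perfect matching $X^t$ in which every agent $i>t$ is matched to a \emph{size-one} bundle (Claim~\ref{cl:induction}). This is exactly what resolves your residual case: when agent $t{+}1$ wants to drop $e_{t+1}$ into her min-bundle $u_{t+1}$ but $u_{t+1}$ is currently matched to someone else $j$, one swaps so that $j$ goes to $X^t(t{+}1)$, which is size one and hence adjacent to everybody by Observation~\ref{ob:simple}(i). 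The point is not merely that \emph{some} size-one bundle survives (as you observe from $m\le 2n$), but that the agent about to insert is already sitting on one. Your ordering by $\mu$ cannot enforce this because $i^\star$ varies unpredictably from step to step.
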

Our algorithm is described in Algorithm~\ref{alg:01}.
If $m\le n$, then an EFX allocation can be obtained by allocating at most one chore to each agent. Otherwise, we denote $m=n+l$ with $1\leq l\leq n$. 
Our basic idea is as follows.
First, we create an allocation $A=(A_u)_{u\in U}$ to $U$ by setting $A_u=\emptyset$ for any $u\in U$. Then, we add chores to one of the chore sets in $A$ one by one while maintaining the condition that $G_A$ has a perfect matching. 
If this condition is satisfied after all chores have been allocated, we can obtain an EFX allocation of $M$ by Observation~\ref{ob:final}.
However, in general, it is not possible to maintain this condition when adding chores in an arbitrary order. Intuitively, this is because it becomes difficult to keep an edge in $G_A$ if heavy chores are added at the end.
To address this issue, we first let $l$ agents (e.g., agents $l, l-1,\dots,1$) choose the chore with the smallest cost for themselves in turn and hold it. 
Then, we create an allocation $A=(A_u)_{u\in U}$ for the remaining $n$ chores, such that $|A_u|=1$ for any $u\in U$. 
Next, we add the held chores to the smallest chore set for each agent in the reverse order ($1, 2, \dots, l$). By applying induction, 
we can show that $G_A$ always has a perfect matching during the entire process of adding chores.

\begin{algorithm}[htb]
\caption{ Case when $m\le 2n$}
\label{alg:01}
\begin{algorithmic}[1]
\Require a set of agents $N$, a set of chores $M$ of size at most $2n$, and a cost function $c_i$ for each $i \in N$
\Ensure  an EFX allocation $A^*$ of $M$.
\State $R\leftarrow M$ 
\Comment{remaining set of chores}
\State $A_u \leftarrow \emptyset$ for all $u\in U$, where $U$ is a set of size $n$.
\State Set $l=\max\{m-n, 0\}$.
\For {$i=l, l-1, \cdots, 1$}
    \State Pick up $e_i \in \argmin_{e\in R} c_i(e)$
    \State $R\leftarrow R\setminus e_i$
\EndFor
\For {$u\in U$}
    \State Pick up $e\in R$ arbitrarily (if it exists).
    \State $A_{u}\leftarrow \{e\}, R\leftarrow R\setminus e$
\EndFor
\For {$i=1, 2, \cdots, l$}
    \State Pick up $u_i\in \argmin_{u\in U} c_i(A_u)$
    \State $A_{u_i}\leftarrow A_{u_i}\cup e_i$
\EndFor
\State Find a perfect matching $X$ on $G_A$.
\State Construct the allocation $A^*$ by allocating each chore set to the  matched agent in $X$.\\
\Return $A^*$
\end{algorithmic}
\end{algorithm}
\begin{figure}[tbp]
    \centering
    \includegraphics[width=100mm]{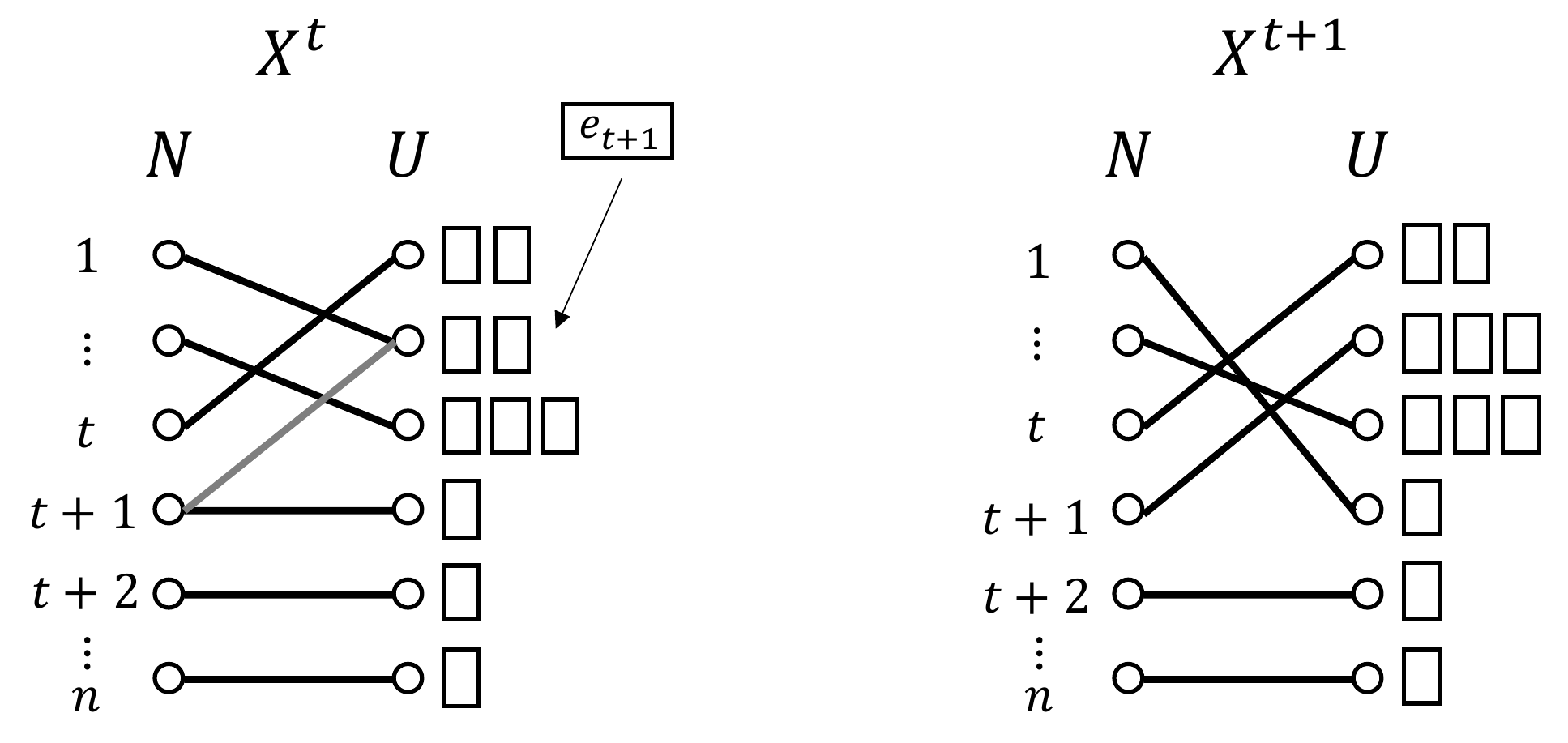}
    \caption{Situation in Case 2 of Claim~\ref{cl:induction}.
    The black edge set in the left figure represents $X^t$, and the black edge set in the right figure represents $X^{t+1}$.
    The gray edge in the left figure represents an edge in $E^{\min}_{A^t}$.
    }
    \label{fig:claim}
\end{figure}
\begin{proof}[Proof of Theorem~\ref{thm:2n}]
We first show the correctness of Algorithm~\ref{alg:01}.
By Observation~\ref{ob:simple}, it is sufficient to show that $G_A$ has a perfect matching in line $13$ of Algorithm~\ref{alg:01}.
If $l=0$, that is, $m\le n$, the first and third for-loops are not executed.
After the middle for-loop is executed, $|A_u|\le 1$ for any $u\in U$ and $R=\emptyset$.
Hence, $A=(A_u)_{u\in U}$ is an allocation of $M$ to $U$ and $G_A$ becomes a perfect bipartite graph by Observation~\ref{ob:simple} (i).
Thus, $G_A$ has a perfect matching.

Suppose next that $l > 0$. 
For $t\in \{0\}\cup [l]$, we call the $t$-th execution of the third for-loop  {\it round $t$}.
Let $A^t=(A^t_u)_{u \in U}$ be the allocation to $U$ immediately after adding a chore in round $t$.
We show the existence of a perfect matching in $G_{A^t}$ by induction on $t$.
More precisely, we show the following claim.
\begin{claim}\label{cl:induction}
For any round $t\in \{0\}\cup [l]$, there exists a perfect matching $X^t$ in $G_{A^t}$ such that 
\begin{equation}
\label{cond:01}
(i,u)\in X^t \Rightarrow |A^t_u|=1 {\it \ for\ any\ } i\in \{t+1,\ldots, n\}.
\end{equation}
\end{claim}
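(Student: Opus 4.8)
The plan is to prove Claim~\ref{cl:induction} by induction on $t$ and then combine it with Observation~\ref{ob:final} to conclude. Before the induction I would record one elementary fact about the chores $e_i$ chosen in the first for-loop: at the moment $e_i$ is picked, the remaining set is $R=M\setminus\{e_l,\dots,e_{i+1}\}$, and the set of chores allocated in $A^{i-1}$ (the chores placed by the middle for-loop, together with $e_1,\dots,e_{i-1}$) is contained in this $R$, since indices $1,\dots,i-1$ are disjoint from $i+1,\dots,l$. As $e_i\in\argmin_{e\in R}c_i(e)$, this gives $c_i(e_i)\le c_i(e')$ for every chore $e'$ present in $A^{i-1}$ — exactly the hypothesis needed to invoke Lemma~\ref{lem:basic} and Corollary~\ref{cor:1} in round $i$. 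For the base case $t=0$: after the middle for-loop every bundle $A^0_u$ is a singleton, so $G_{A^0}$ is a complete bipartite graph by Observation~\ref{ob:simple}(i); take any perfect matching $X^0$, and \eqref{cond:01} holds trivially.

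For the inductive step, assume $X^t$ is as in the claim. In round $t+1$ the algorithm picks $u_{t+1}\in\argmin_{u}c_{t+1}(A^t_u)$, so $(t+1,u_{t+1})\in E^{\min}_{A^t}$ by monotonicity of $c_{t+1}$, and forms $A^{t+1}$ by adding $e_{t+1}$ to $A^t_{u_{t+1}}$. If $(t+1,u_{t+1})\in X^t$, i.e.\ $X^t(t+1)=u_{t+1}$, then Corollary~\ref{cor:1} applied with $i=t+1$ (using the monotonicity fact above) shows $X^t$ is still a perfect matching in $G_{A^{t+1}}$; set $X^{t+1}:=X^t$. The only bundle that grew is $u_{t+1}$, which is matched to $t+1\notin\{t+2,\dots,n\}$, while every bundle matched to an agent of $\{t+2,\dots,n\}$ is an untouched singleton by the inductive hypothesis, so \eqref{cond:01} holds for round $t+1$.

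I expect the main obstacle to be the remaining case $(t+1,u_{t+1})\notin X^t$: forcing agent $t+1$ onto its minimum-cost bundle $u_{t+1}$ displaces the agent $j:=X^t(u_{t+1})$, and an arbitrary re-routing (e.g.\ augmenting-path surgery) risks either killing the newly required edge $(t+1,u_{t+1})$ or destroying the singleton structure that the next round relies on. The resolution I would use is that the inductive hypothesis makes this unnecessary: writing $u':=X^t(t+1)$, since $t+1\in\{t+1,\dots,n\}$ we have $|A^t_{u'}|=1$, so by Observation~\ref{ob:simple}(i) the edge $(j,u')$ lies in $E_{A^t}$ automatically. Hence the $2$-edge swap
$$
\tilde X:=\bigl(X^t\setminus\{(t+1,u'),(j,u_{t+1})\}\bigr)\cup\{(t+1,u_{t+1}),(j,u')\}
$$
is a perfect matching in $G_{A^t}$ with $\tilde X(t+1)=u_{t+1}$ and $(t+1,u_{t+1})\in E^{\min}_{A^t}$, and it still sends every agent of $\{t+2,\dots,n\}$ to a singleton bundle of $A^t$ (the only reassigned agents are $t+1$ and $j$, and if $j\in\{t+2,\dots,n\}$ it now receives the singleton $u'$). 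Applying Corollary~\ref{cor:1} to $\tilde X$ shows it remains a perfect matching in $G_{A^{t+1}}$; set $X^{t+1}:=\tilde X$. Again the only enlarged bundle, $u_{t+1}$, is matched to $t+1$, so \eqref{cond:01} holds for round $t+1$, completing the induction.

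Finally, after round $l$ the allocation $A^l$ distributes all of $M$, so Claim~\ref{cl:induction} yields a perfect matching in $G_{A^l}$ and Observation~\ref{ob:final} produces an EFX allocation of $M$; since the three for-loops and the single bipartite-matching computation are polynomial, the running-time claim follows as well. Essentially all the content is in the Case 2 swap above — the rest is bookkeeping with Lemma~\ref{lem:basic} and Corollary~\ref{cor:1} plus the observation that the chores $e_i$ were chosen to be minimal enough for the corollary to apply.
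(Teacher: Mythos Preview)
Your proof is correct and follows essentially the same approach as the paper: induction on $t$, with the base case from Observation~\ref{ob:simple}(i), and in the inductive step the same two-case split and the same $2$-edge swap in Case~2, using the inductive hypothesis to guarantee $|A^t_{X^t(t+1)}|=1$ so that the displaced agent can be rerouted there via Observation~\ref{ob:simple}(i). The only cosmetic difference is that you first verify the swapped matching $\tilde X$ lives in $G_{A^t}$ and then invoke Corollary~\ref{cor:1}, whereas the paper verifies the two new edges directly in $G_{A^{t+1}}$ using Lemma~\ref{lem:basic}(i) and Observation~\ref{ob:simple}(i); these are the same argument packaged two ways.
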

\begin{proof}[Proof of Claim~\ref{cl:induction}]
We show the claim by induction on $t$.
For the base case $t=0$, after the middle for-loop is executed, $|A_u|= 1$ holds for any $u\in U$.
Hence, $G_A$ has a perfect matching and condition (\ref{cond:01}) obviously holds.
For the inductive step, we assume that Claim~\ref{cl:induction} holds for $t\in \{0\}\cup [l]$.
Let $X^t$ be a perfect matching in $G_{A^t}$ satisfying (\ref{cond:01}).
Since the first and third for-loops are executed in the reverse order with respect to $i$, 
it holds that 
$$
c_{t+1}(e_{t+1}) \le c_{t+1}(e') {\rm \ for \ any\ } e'\in \bigcup_{u\in U} A^t_u.
$$
Thus, by applying Lemma~\ref{lem:basic} for $i=t+1$ and $e=e_{t+1}$, we obtain the following, 
where we recall that $u_{t+1} \in \argmin_{u\in U} c_{t+1}(A^t_u)$. 
\begin{enumerate}
\item[(i)] $(t+1, u_{t+1})\in E_{A^{t+1}}$. 
\item[(ii)] $(j, u)\in E_{A^t} \Rightarrow (j, u)\in E_{A^{t+1}}$ for any $j\in N$ and $u\in U\setminus u_{t+1}$.
\end{enumerate}
We consider two cases separately.
\begin{description}
\item [Case 1:]$(t+1, u_{t+1})\in X^t$\\
Let $X^{t+1}=X^t$.
By (i) and (ii) above, we obtain $X^{t+1}\subseteq E_{A^{t+1}}$.
In addition, for $i\in \{t+2,\ldots, n\}$, 
if $(i,u)\in X^{t+1}=X^t$, then $|A^{t+1}_u|=|A^{t}_u|=1$ holds by the induction hypothesis.
Thus, $X^{t+1}$ is a perfect matching in $E_{A^{t+1}}$ satisfying condition (\ref{cond:01}).

\item [Case 2:]$(t+1, u_{t+1})\notin X^t$ 

In this case, we create a new perfect matching by swapping two edges in $X^t$ (see Figure~\ref{fig:claim}).
Formally, we define $X^{t+1}=X^t\cup \{(t+1, u_{t+1}), (X^t(u_{t+1}), X^t(t+1))\}\setminus \{(t+1, X^t(t+1)), (X^t(u_{t+1}), u_{t+1})\}$.
We see that $(t+1, u_{t+1})\in E_{A^{t+1}}$ holds by (i) above and 
$(X^t(u_{t+1}), X^t(t+1)) \in E_{A^{t+1}}$ holds by Observation~\ref{ob:simple} (i).
Hence, $X^{t+1}\subseteq E_{A^{t+1}}$.
For any $i\in \{t+2,\ldots, n\}$, if $i=X^{t}(u_{t+1})$, then $(i, X^t(t+1))\in X^{t+1}$, and $|A^{t+1}_{X^t(t+1)}|=|A^t_{X^t(t+1)}|=1$ by the induction hypothesis and $X^{t}(t+1)\neq u_{t+1}$.
Otherwise, since $A^{t+1}_{X^{t+1}(i)}=A^{t}_{X^t(i)}$, $|A^{t+1}_{X^{t+1}(i)}|=|A^{t}_{X^t(i)}|=1$ by the induction hypothesis.
Thus, $X^{t+1}$ is a perfect matching in $E_{A^{t+1}}$ satisfying condition (\ref{cond:01}). 
\end{description}
\end{proof}
By Claim~\ref{cl:induction}, there exists a perfect matching $X^{l}$ in $G_{A^{l}}$.
This means that there exists a perfect matching $X$ in $G_{A}$ in line $13$ of Algorithm~\ref{alg:01}.
Therefore, Algorithm~\ref{alg:01} returns an EFX allocation.

We next show that Algorithm~\ref{alg:01} runs in polynomial time.
For the first for-loop, line 5 is executed in $O(m)$ time for each $i$.
Since $l\le n$, the first for-loop takes $O(mn)$ to execute.
The second for-loop takes $O(n)$ to execute since $|U|\le n$.
The last for-loop takes $O(mn)$ to execute.
Finally, we can find a perfect matching $X$ on $G_A$ by a maximum matching algorithm on a bipartite graph in $O(mn^2)$ time, 
because $G_A$ has $O(n)$ vertices and $O(mn)$ edges.
Therefore, Algorithm~\ref{alg:01} returns an EFX allocation in $O(mn^2)$ time. 
Note that the running time can be improved by using a sophisticated bipartite matching algorithm, but we do not go into details. 
\end{proof}

\section{When $n-1$ Agents Have Identical Ordering Cost Functions}\label{sec:n-1}
In this section, we consider the case where $n-1$ agents have identical ordering cost functions.
For any pair of agents $i$ and $j$, we call the cost functions of $i$ and $j$ are {\it identical ordering} if for any $e$ and $e'$ in $M$, it holds that $c_i(e)<c_i(e') \iff c_j(e)<c_j(e')$.
In other words, we can sort all the chores in non-increasing order of cost for both $i$ and $j$.
For $k\in [n]$, we say that $k$ agents have identical ordering cost functions if the cost functions of any two agents among those $k$ agents are identical ordering.
We show the following theorem.
\begin{theorem}\label{thm:ordering}
There exists an EFX allocation of chores when $n-1$ agents have identical ordering cost functions.
Moreover, we can find an EFX allocation in polynomial time.
\end{theorem}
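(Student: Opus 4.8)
The plan is to follow the same "hold heavy chores, build a skeleton allocation of singletons, then add the held chores back in" strategy used for Theorem~\ref{thm:2n}, but now exploiting the identical-ordering structure to handle the case $m > 2n$. Let agent $n$ be the (possibly arbitrary) special agent and let agents $1,\dots,n-1$ share a common ordering $\succeq$ of the chores; fix a labelling $f_1 \succeq f_2 \succeq \cdots \succeq f_m$ of $M$ according to this order (ties broken arbitrarily). The first step is to dispose of the easy regime $m \le 2n$ by invoking Theorem~\ref{thm:2n} directly, so we may assume $m \ge 2n+1$. The key structural observation I would isolate is that if an allocation $A = (A_u)_{u\in U}$ has every bundle of size at most two and, for each $u$, the bundle $A_u$ consists of one "heavy" chore paired with one "light" chore in a way that respects the common ordering (a balanced pairing, as in the standard identical-ordering argument of \cite{li2022almost}), then every agent among $1,\dots,n-1$ sees all bundles as EFX-acceptable except possibly needing to pick a minimum one; more precisely the EFX-graph restricted to $N\setminus\{n\}$ is essentially complete, so finding a perfect matching reduces to routing the single special agent $n$.

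Concretely, I would first run a greedy "first $l := m - n$ chores held" phase analogous to Algorithm~\ref{alg:01}, but the held chores should be chosen to be (close to) the globally heaviest ones rather than each agent's private minimum, so that what remains, $R$ with $|R| = n$, can be placed one-per-bundle as a skeleton $A^0$. Then I add the held heavy chores back one at a time, each time to a currently-minimum bundle of the agent about to be "frozen," exactly mirroring Claim~\ref{cl:induction}. The inductive invariant needs strengthening: after round $t$ there is a perfect matching $X^t$ in $G_{A^t}$ such that for every $i$ not yet frozen with $i \ne n$, the bundle $X^t(i)$ has size one OR is a balanced pair, and moreover the special agent $n$ is always matched. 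For the identical-ordering agents, Lemma~\ref{lem:basic} still applies because adding a globally-heavy chore $e$ to a minimum bundle of agent $i$ (with $i$ among the common-ordering agents) satisfies $c_i(e) \le c_i(e')$ for all already-placed $e'$ only if the skeleton chores are the lightest — so I would instead process the common-ordering agents in the order of Algorithm~\ref{alg:01} (heaviest-cost agent-choice first, add back lightest-agent-choice first) and handle agent $n$ as a special insertion, using part~(i) of Observation~\ref{ob:simple} and a swap argument on the directed graph $D_A$ from Observation~\ref{ob:simple}(iii) to reroute $n$ onto a size-$\le 1$ bundle whenever its current bundle grows.

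The main obstacle I anticipate is the interaction between the special agent $n$ and the held chores: when a heavy chore is added to a bundle currently matched to $n$, the edge $(n,\cdot)$ may break, and unlike the identical-ordering agents we have no ordering control over $c_n$, so we cannot guarantee via Lemma~\ref{lem:basic} that $n$ keeps an edge. The fix will be to maintain as part of the invariant that $n$ is matched to a singleton bundle (or to a bundle that will never again receive a chore), and to prove that the common ordering among the other $n-1$ agents gives enough "slack" — each of them is simultaneously happy with many bundles — that we can always perform an augmenting-path-style reroute (a directed cycle through $E^{\min}$ as in Observation~\ref{ob:simple}(iii)) that moves $n$ back onto a light bundle while keeping everyone else matched. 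Making this reroute well-defined at every round, and checking it preserves the "size $\le 2$ / balanced pair" structure that the identical-ordering EFX argument needs, is the crux; I expect the bipartite-matching / Hall-type bookkeeping here to be the longest part of the proof, while the additive and identical-ordering inequalities themselves will be routine given Lemma~\ref{lem:basic} and Corollary~\ref{cor:1}.
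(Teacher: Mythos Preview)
Your proposal has a genuine structural gap. You explicitly assume $m \ge 2n+1$, yet your invariant demands that every bundle matched to a not-yet-frozen agent has size at most two (singleton or ``balanced pair''). Since there are only $n$ bundles and all $m \ge 2n+1$ chores must eventually land in them, some bundle will have size $\ge 3$ before all agents are frozen; the ``size $\le 2$ / balanced pair'' invariant simply cannot be maintained throughout. More fundamentally, the hold-and-add-back scheme from Algorithm~\ref{alg:01} works in Section~\ref{sec:2n} precisely because $l = m-n \le n$, so each bundle receives at most one held chore; once $l > n$ this breaks, and your proposal gives no mechanism for adding a second (or third) chore to a bundle while keeping the required EFX edges. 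You also notice the order-of-addition problem yourself (holding the heavies means the added-back chore is \emph{not} the minimum among placed chores, so Lemma~\ref{lem:basic} does not apply), but your suggested fix --- reverting to Algorithm~\ref{alg:01}'s per-agent choice order --- again only yields one held chore per agent, which is insufficient for $m > 2n$.

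The paper's approach sidesteps all of this with a single clean idea: do \emph{not} build a skeleton. Instead, start from the empty allocation and add the chores one at a time in \emph{decreasing} common order $e_1, e_2, \dots, e_m$. Then at every step the chore $e_t$ being added is the minimum, for every agent $i \in [n-1]$, among all chores already placed, so the hypothesis of Lemma~\ref{lem:basic} is satisfied for all $n-1$ identical-ordering agents simultaneously. The entire argument reduces to a single lemma (Lemma~\ref{cl:add}): given a partial allocation $A$ whose EFX-graph has a perfect matching and a chore $e$ that is a common minimum for agents $1,\dots,n-1$, there is a bundle $u^*$ such that adding $e$ to $A_{u^*}$ preserves the existence of a perfect matching. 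The proof of that lemma is exactly the directed-graph reroute you anticipate: build $D_A$ with min-edges from $[n-1]$ and matching-edges back from $U$; if $D_A$ has a directed cycle it avoids $n$ and you reroute along it (Corollary~\ref{cor:1} finishes); if $D_A$ is acyclic, add $e$ to $A_{X(n)}$, and if the edge $(n,X(n))$ breaks, follow the unique directed path from $n$'s new minimum bundle back to $n$ and swap along it. No size bound on bundles, no freezing, no balanced-pair bookkeeping is needed.
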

From now on, we assume that agents $1,2,\ldots n-1$ have identical ordering cost functions.
Our algorithm is described in Algorithm~\ref{alg:02}.
Our basic idea is quite similar to the approach in Section~\ref{sec:2n}.
First, we create an allocation $A=(A_u)_{u\in U}$ to $U$ by setting $A_u=\emptyset$ for any $u\in U$.
We sort the chores in non-increasing order of cost for agents $1,2,\ldots, n-1$, which is possible as they have identical ordering cost functions. 
Then, in this order, we add chores to one of the chore sets in $A$ one by one while maintaining the condition that $G_A$ has a perfect matching. 
If this condition is satisfied after all chores have been allocated, we can obtain an EFX allocation of $M$ by Observation~\ref{ob:final}.
By applying induction, we can show that $G_A$ always has a perfect matching during the entire process of adding chores.
\begin{algorithm}[htb]
\caption{ Case when $n-1$ agents have identical ordering cost functions.}
\label{alg:02}
\begin{algorithmic}[1]
\Require a set of agents $N$, a set of chores $M$, and a cost function $c_i$ for each $i \in N$, where $c_1, \ldots , c_{n-1}$ are identical ordering.
\Ensure  an EFX allocation $A^*$ of $M$.
\State $A_u \leftarrow \emptyset$ for all $u\in U$, where $U$ is a set of size $n$.
\State Sort all the chores in $M$: $c_i(e_1)\ge c_i(e_2)\ge \cdots \ge c_i(e_m)$ for all $i\in [n-1]$.
\For {$t=1, 2, \cdots, m$}
    \State Find a vertex $u^*\in U$ such that $G_{A'}$ has a perfect matching, 
    \State where $A'$ is the allocation to $U$ obtained from $A$ by adding $e_t$ to $A_{u^*}$.
    \State $A_{u^*}\leftarrow A_{u^*}\cup e_t$ 
\EndFor
\State Find a perfect matching $X$ on $G_A$.
\State Construct the allocation $A^*$ by allocating each chore set to the matched agent in $X$.\\
\Return $A^*$
\end{algorithmic}
\end{algorithm}

\begin{figure}[tbp]
    \centering
    \includegraphics[width=100mm]{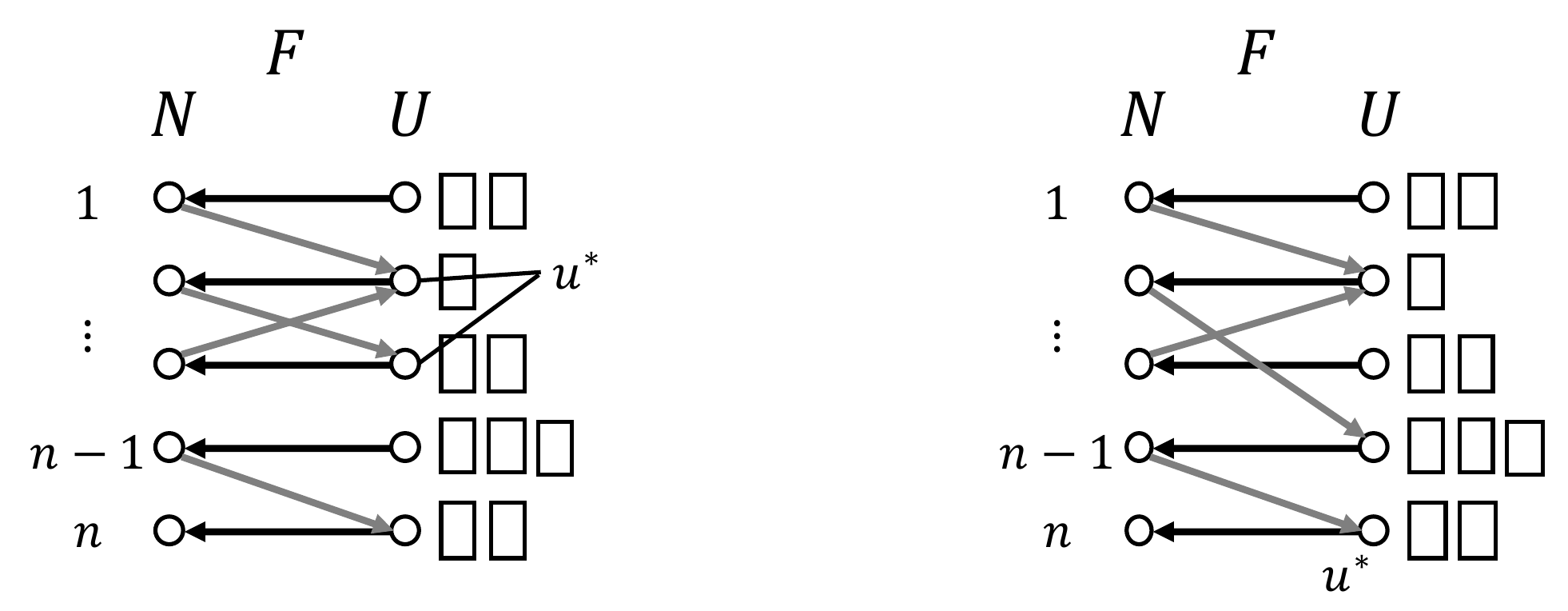}
    \caption{Two cases in Lemma~\ref{cl:add}: 
    $D_A$ has a directed cycle (left) and $D_A$ has no directed cycles (right).
    }
    \label{fig:claim2}
\end{figure}

In order to show that there exists a vertex $u^*$ satisfying the desired condition in lines 4 and 5 of Algorithm~\ref{alg:02}, we prove the following lemma, which will be used also in Section~\ref{sec:bi-valued}.

\begin{lemma}\label{cl:add}
Let $M'\subsetneq M$ and $e \in M \setminus M'$ such that $c_i(e)\le c_i(e')$ for any $i\in [n-1]$ and $e'\in M'$.
Let $A=(A_u)_{u\in U}$ be an allocation to $U$ of $M'$ such that $G_{A}$ has a perfect matching. 
Then, there exists a vertex $u^*\in U$ such that $G_{A'}$ has a perfect matching, where $A'$ is the allocation to $U$ of $M'\cup e$ obtained from $A$ by adding $e$ to $A_{u^*}$.
\end{lemma}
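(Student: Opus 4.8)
The plan is to start from a perfect matching $X$ in $G_A$ that, by Observation~\ref{ob:simple}~(iii), satisfies $(i,X(i))\in E^{\min}_A$ for some agent $i\in N$. The natural first attempt is to add $e$ to $A_{X(i)}$: by Corollary~\ref{cor:1}, if this $i$ can be taken with $i\in[n-1]$ (so that the hypothesis $c_i(e)\le c_i(e')$ applies), then $X$ remains a perfect matching in $G_{A'}$ and we are done with $u^*=X(i)$. So the only trouble is when the ``$E^{\min}$-matched'' agent guaranteed by Observation~\ref{ob:simple}~(iii) is forced to be agent $n$ — i.e. every perfect matching $X$ with an $E^{\min}_A$-edge uses that edge only at agent $n$.

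To handle that case I would rerun the cycle-rotation idea behind Observation~\ref{ob:simple}~(iii), but now tracking which agent ends up on the $E^{\min}$-edge. Build the digraph $D_A=(N,U;F)$ with $F=\{(i,u):(i,u)\in E^{\min}_A\}\cup\{(u,X(u)):u\in U\}$. Every vertex has out-degree $\ge 1$ by Observation~\ref{ob:simple}~(ii), so following arcs from any agent $j\in[n-1]$ eventually repeats a vertex, producing either a directed cycle through $j$ or a directed path from $j$ into a cycle. Two cases, matching the two pictures in Figure~\ref{fig:claim2}. First, if $D_A$ has a directed cycle $\overrightarrow{C}$ of length $>2$ that avoids agent $n$ (equivalently, contains some $j\in[n-1]$), then $X' = X\bigtriangleup E[C]$ is a perfect matching in $G_A$ with $(j,X'(j))\in E^{\min}_A$ for an agent $j\in[n-1]$, and Corollary~\ref{cor:1} applied to $j$ finishes with $u^*=X'(j)$. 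Second, if no such cycle exists, then I would argue that for \emph{every} agent $j\in[n-1]$ the unique walk in $D_A$ starting at $j$ runs into a cycle that passes through agent $n$; in particular $n$ lies on a directed cycle $\overrightarrow{C}$ of length $>2$, and setting $X' = X\bigtriangleup E[C]$ gives a perfect matching with $(n,X'(n))\in E^{\min}_A$. Now add $e$ to $A_{X'(n)}$; call the result $A'$. We need to check $X'$ is still perfect in $G_{A'}$: Lemma~\ref{lem:basic}~(ii) preserves every edge $(j,u)\in E_A$ with $u\ne X'(n)$, so only the edge $(n,X'(n))$ is in question, and since agent $n$ has no ordering hypothesis, we cannot invoke Lemma~\ref{lem:basic}~(i) for it.

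This is where the real work lies, and I expect it to be the main obstacle: showing that when the bad case occurs, adding $e$ to $A_{X'(n)}$ does not destroy the edge $(n, X'(n))$ — or, alternatively, that the bad case simply cannot happen in a way that blocks us. The key structural fact I would try to exploit is that agents $1,\dots,n-1$ all share the same ordering, so the chore set $A_{X'(n)}$, being the $\arg\min$ bundle for agent $n$ among all bundles in a matching that routes $[n-1]$-agents elsewhere, interacts nicely with that common order: because $e$ is a globally $[n-1]$-smallest chore and all of $A_{X'(n)}$'s ``competition'' is held by the identically-ordered agents, one can show $A_{X'(n)}$ must in fact be small (e.g. $|A_{X'(n)}|\le 1$), so that after adding $e$ we have $|A'_{X'(n)}|\le 2$ and $\max_{f\in A'_{X'(n)}} c_n(A'_{X'(n)}\setminus f)=c_n(A_{X'(n)})=\min_k c_n(A_k)\le \min_k c_n(A'_k)$, giving $(n,X'(n))\in E_{A'}$. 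If that cardinality claim is too strong, the fallback is a counting/exchange argument: among the bundles of $A$, the ones not matched to $[n-1]$-agents in $X'$ are exactly those that the identically-ordered agents can EFX-tolerate, and one leverages that $e$ added to the $n$-minimal such bundle keeps it $n$-minimal.

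In all cases the output $u^*$ is $X'(j)$ (with $j\in[n-1]$) in the good subcase and $X'(n)$ in the bad subcase, and $X'$ witnesses the perfect matching in $G_{A'}$; this also makes the construction polynomial, since it only uses bipartite matching and one cycle-cancellation. To summarize the order of steps: (1) take a perfect matching with an $E^{\min}_A$-edge; (2) if that edge can be placed at some $j\in[n-1]$, add $e$ there and invoke Corollary~\ref{cor:1}; (3) otherwise, rotate along a directed cycle in $D_A$ through agent $n$ to get a matching $X'$ with $(n,X'(n))\in E^{\min}_A$; (4) add $e$ to $A_{X'(n)}$ and verify, via the identical-ordering structure, that $(n,X'(n))$ survives in $G_{A'}$ while Lemma~\ref{lem:basic}~(ii) keeps the rest of $X'$. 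Step (4) is the crux.
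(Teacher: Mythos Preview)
Your step~(4) is where the argument breaks. First, the lemma does \emph{not} assume that agents $1,\dots,n-1$ have identical ordering cost functions --- it only assumes that the single chore $e$ is $c_i$-minimal for each $i\in[n-1]$, and indeed the lemma is reused in Section~\ref{sec:bi-valued} for instances that are not identically ordered. So any appeal to ``the identical-ordering structure'' lies outside the hypotheses. Second, and more fundamentally, the edge $(n,X'(n))$ simply need not survive in $G_{A'}$: agent $n$'s cost function is unconstrained, $c_n(e)$ can be large, and there is no reason $A_{X'(n)}\cup e$ should remain $n$-minimal among the bundles of $A'$ (your fallback claim) or satisfy $|A_{X'(n)}|\le 1$ (your primary claim). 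Both assertions are unsupported and false in general.

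The idea you are missing is that one should not try to pin agent $n$ to $u^*$; rather, one lets $n$ relocate \emph{after} $e$ has been added. The paper builds $D_A$ \emph{without} agent $n$'s outgoing min-arcs, so that $n$ is the unique vertex of out-degree $0$. If $D_A$ has a directed cycle, that cycle necessarily avoids $n$ and your good case applies. If $D_A$ is acyclic, set $u^*=X(n)$ and add $e$ there. Now either $(n,X(n))\in E_{A'}$ and $X$ still works, or else pick $u\neq X(n)$ with $(n,u)\in E^{\min}_{A'}$; acyclicity guarantees a directed path $u\to\cdots\to X(n)\to n$ in $D_A$, and swapping along this path together with the edge $(n,u)$ yields a matching $X'$ in which $n$ holds $u$ (an edge of $E_{A'}$) while $u^*=X(n)$ is now held by some $j\in[n-1]$ with $(j,u^*)\in E^{\min}_A$, hence $(j,u^*)\in E_{A'}$ by Lemma~\ref{lem:basic}~(i). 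All remaining $X'$-edges avoid $u^*$ and survive by Lemma~\ref{lem:basic}~(ii). The point is that the problematic agent is rerouted in the \emph{new} graph $G_{A'}$, not pinned in the old one.
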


\begin{proof}
Let $X$ be a perfect matching on $G_{A}$, and $e\in M\setminus M'$ be an unallocated chore such that 
$c_i(e)\le c_i(e')$ for any $i\in [n-1]$ and $e'\in M'$.
For an allocation $A=(A_u)_{u\in U}$ to $U$ of $M'$, 
we consider a directed bipartite graph $D_A=(N, U; F)$, where the vertex set consists of $N$ and $U$, and the arc set $F$ is defined by
$$
F=\{(i,u)\mid (i,u)\in E^{\min}_A, i\in [n-1]\}\cup \{(u, X(u))\mid u\in U\}.
$$
We consider two cases separately. See also Figure~\ref{fig:claim2}.
\begin{description}
\item[Case 1:] $D_A$ has a directed cycle.\\
Let $\overrightarrow{C}$ be any directed cycle in $D_A$.
Since the vertex $n\in N$ has no outgoing arc in $D_A$, we have $n\not \in V[\overrightarrow{C}]$.
If $|V[\overrightarrow{C}]|=2$, define a perfect matching $X'=X$ in $G_A$.
Otherwise, define a new perfect matching $X'=X\bigtriangleup E[C]$ in $G_A$, where $C$ is the underlying (undirected) cycle of $\overrightarrow{C}$. 
In both cases, we pick up $i\in V[\overrightarrow{C}]\cap [n-1]$ and choose $X'(i)$ as $u^*$. 
Then, by the definition of $F$, it holds that $(i, X'(i))\in E^{\min}_A$.
By applying Corollaly~\ref{cor:1}, $G_{A'}$ has a perfect matching $X'$, where $A'$ is the allocation to $U$ of $M'\cup e$ obtained from $A$ by adding $e$ to $A_{X'(i)}$.
\item[Case 2:] $D_A$ has no directed cycles.\\
We choose $X(n)$ as $u^*$.
Let $A'$ be an allocation to $U$ of $M'\cup e$ obtained from $A$ by adding $e$ to $A_{X(n)}$.
By the same argument as in Lemma~\ref{lem:basic} (ii), we can see that
\begin{equation}
\label{cond:02}
(j, u)\in E_{A} \Rightarrow (j, u)\in E_{A'} \ {\rm for\ any}\ j\in N \ {\rm and}\  u\in U\setminus X(n).
\end{equation}
If $(n, X(n))\in E_{A'}$, then $X'=X$ is a perfect matching in $G_{A'}$ by (\ref{cond:02}).
Otherwise, there exists $u\neq X(n)$ with $(n,u)\in E^{\min}_{A'}$ by Observation~\ref{ob:simple} (ii).
Since all vertices except $n$ have at least one outgoing arc in the acyclic digraph $D_A$, all vertices have a directed path to $n$ in $D_A$.
Let $\overrightarrow{P}$ be a directed path from $u$ to $n$ in $D_{A}$ and 
$P$ be the underlying path of $\overrightarrow{P}$.
Note that $P \cup (n, u)$ forms a cycle, and it traverses edges in $X$ and ones not in $X$ alternately by the definition of $F$. 
This shows that $X':= X \bigtriangleup (P\cup (n, u))$ is a perfect matching in $E_A \cup (n, u)$ such that $(X'(u^*), u^*) \in E^{\min}_{A}$. 
Since each edge in $X' \setminus \{(n, u), (X'(u^*), u^*)\}$ is in $E_{A'}$ by (\ref{cond:02}), $(X'(u^*), u^*)$ is in $E_{A'}$ by Lemma~\ref{lem:basic} (i), and $(n,u)\in E^{\min}_{A'}$,  
$X'$ is a perfect matching in $G_{A'}$. 
\end{description}
\end{proof}

We are now ready to prove Theorem~\ref{thm:ordering}. 

\begin{proof}[Proof of Theorem~\ref{thm:ordering}]
We first show the correctness of Algortihm~\ref{alg:02}.
By Lemma~\ref{cl:add}, we can pick up $u^*$ satisfying the desired condition in lines 4 and 5 of Algorithm~\ref{alg:02}.
During the execution of the for-loop, we maintain the condition that $G_A$ has a perfect matching.
Thus, there exists a perfect matching in line 7 of Algorithm~\ref{alg:02}.
Therefore, we can find an EFX allocation by Observation~\ref{ob:final}.

We next show that Algorithm~\ref{alg:02} runs in polynomial time.
Line 2 is easily done in polynomial time using a sorting algorithm.
Lines 4 and 5 can be executed in polynomial time since we can check the condition by using a maximum matching algorithm for all $u^*\in U$, which can be done in polynomial time.
Thus, the for-loop runs in polynomial time.
By applying a maximum matching algorithm again, we can find a perfect matching in line 7 in polynomial time.
Therefore, Algorithm~\ref{alg:02} runs in polynomial time.
\end{proof}

We give a remark here that, in our proofs of Lemma~\ref{cl:add} and Theorem~\ref{thm:ordering}, we do not use the explicit form of the cost function of the remaining agent (agent $n$). Therefore, we can slightly generalize Theorem~\ref{thm:ordering} so that the remaining agent can have a general (i.e., non-additive) cost function.

\section{Personalized Bi-valued Instances}\label{sec:bi-valued}
In this section, we consider the case where the number of agents is three and each agent has a {\it personalized bi-valued} cost function, which means that, for any $i\in N (=[3])$,  
there exist $a_i, b_i\ge 0$ with $a_i\neq b_i$ such that $c_i(e)\in \{a_i, b_i\}$ for any $e\in M$.
We prove the following theorem.
\begin{theorem}\label{thm:bivalued}
There exists an EFX allocation of chores when $n=3$ and each agent has a personalized bi-valued cost function.
Moreover, we can find an EFX allocation in polynomial time.
\end{theorem}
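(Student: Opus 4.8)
The plan is to follow the same template as in Sections~\ref{sec:2n} and~\ref{sec:n-1}: sort the chores so that they can be added one at a time, in non-increasing order of cost, while maintaining the invariant that the EFX-graph $G_A$ has a perfect matching. The difference from Section~\ref{sec:n-1} is that now there is no pair of agents whose orderings agree, so I cannot fix a single global ordering of the chores that is simultaneously non-increasing for all three agents. However, each agent has only two distinct cost values, so the ordering ``constraint'' for agent $i$ is very coarse: it only distinguishes the \emph{big} chores (cost $\max\{a_i,b_i\}$) from the \emph{small} chores (cost $\min\{a_i,b_i\}$). First I would partition $M$ according to the three bi-partitions (big/small) of the three agents; this gives at most $2^3=8$ classes of chores. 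The key structural observation to establish is that within this coarse classification, one can still process the chores in a suitable order so that whenever a chore $e$ is about to be added, there is \emph{some} agent $i$ for whom $c_i(e)\le c_i(e')$ for every already-allocated chore $e'$ — and this is exactly the hypothesis needed to invoke Lemma~\ref{cl:add} (or rather a two-agent analogue of it), since Lemma~\ref{cl:add} only required the cost-minimality condition to hold for $n-1=2$ of the agents.

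More concretely, the main work is to prove an analogue of Lemma~\ref{cl:add} for $n=3$ under the weaker hypothesis that there exist \emph{two} agents $i,j$ (not necessarily agents $1,2$) with $c_i(e)\le c_i(e')$ and $c_j(e)\le c_j(e')$ for all already-allocated $e'$. Re-reading the proof of Lemma~\ref{cl:add}, the roles of agents $1,\dots,n-1$ are symmetric and only the fact that $\ge n-1$ agents satisfy the minimality condition is used (to guarantee that the digraph $D_A$, built from $E_A^{\min}$-arcs out of those agents plus matching-arcs, has every vertex except the one ``excluded'' agent with out-degree $\ge 1$). So the same argument goes through verbatim with agents $\{i,j\}$ playing the role of $\{1,\dots,n-1\}$ and the third agent $k$ playing the role of agent $n$. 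Hence the real content is combinatorial: order the $\le 8$ chore-classes into a sequence $e_1,\dots,e_m$ such that, at every step $t$, the newly added chore $e_t$ is a small chore for at least two of the three agents among those whose small-chores have not all been exhausted — more precisely, such that for $e_t$ there are two agents $i,j$ with $c_i(e_t)\le c_i(e_s)$ and $c_j(e_t)\le c_j(e_s)$ for all $s<t$. I would argue this ordering exists by a greedy/exchange argument on the $8$ classes, or alternatively by a short case analysis on how the three big/small partitions interact (e.g.\ grouping chores by how many agents consider them ``big'').

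The steps, in order, would be: (1) set up notation $a_i,b_i$ and WLOG $a_i<b_i$, and classify each chore by the set of agents for whom it is big; (2) prove the two-agent version of Lemma~\ref{cl:add} by copying its proof with $\{1,\dots,n-1\}$ replaced by an arbitrary $2$-subset of $N$; (3) prove the ordering lemma: the chores can be arranged so that at each insertion step the just-added chore is cost-minimal (among allocated chores) for at least two agents; (4) run the insertion algorithm starting from the empty allocation $A_u=\emptyset$, invoking step (2) at each step to find the vertex $u^*$ to which $e_t$ is added, thereby maintaining a perfect matching in $G_A$; (5) conclude via Observation~\ref{ob:final}, and note polynomial running time since the ordering is computed once and each step runs a bipartite matching algorithm. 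The step I expect to be the main obstacle is step (3): with three mutually non-aligned bi-valued orderings it is not obvious that a globally consistent ``small-first'' processing order exists, and one must be careful that ``small for agent $i$'' chores are not all used up before agent $i$ is needed as one of the two minimality-witnesses; handling the class of chores that are \emph{big} for two or even all three agents (where no good witness pair is available unless such chores are scheduled early) is where the argument will need the most care, and may force a more clever ordering than a naive ``sort by number of agents who find it big''.
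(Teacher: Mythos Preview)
Your step~(3) is not just tricky---it is false in general, so the whole programme breaks. Consider $N=\{1,2,3\}$ and $M=\{f_1,f_2,f_3,e_1,e_2,e_3\}$ where each $f_j$ is consistently large ($c_i(f_j)=1$ for all $i$) and $e_k$ is small only for agent~$k$ (so $c_k(e_k)=\varepsilon_k$ and $c_i(e_k)=1$ for $i\neq k$). Let $t_k$ denote the position of $e_k$ in any candidate ordering and suppose WLOG $t_1<t_2<t_3$. At step $t_3$ we add $e_3$; agent~$3$ is a witness since $c_3(e_3)=\varepsilon_3$, but agent~$1$ is not (because $c_1(e_3)=1>c_1(e_1)=\varepsilon_1$ and $e_1$ is already allocated) and likewise agent~$2$ is not. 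Hence only one witness is available and Lemma~\ref{cl:add} cannot be invoked. No permutation of $M$ avoids this, so the ``ordering lemma'' you are hoping to prove simply does not hold.

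The paper's proof agrees with you exactly on the part that \emph{does} work: whenever some chore $e$ is small for at least two agents (i.e., consistently small or large only for one agent), one removes $e$, recurses, and re-inserts $e$ via Lemma~\ref{cl:add} or Corollary~\ref{cor:1}---this is your steps~(2) and~(4). The gap is precisely the residual case in which \emph{every} chore is big for at least two agents (consistently large, or small only for one agent). There the matching/EFX-graph machinery is abandoned and the paper uses a direct round-robin construction: if all chores are consistently large this is immediate, and otherwise one or two ``small only for one agent'' chores are set aside, a round-robin allocation is computed on the rest with a carefully chosen agent ordering, and the set-aside chores are appended to the designated agents, with EFX verified by hand using Lemma~\ref{lem:ef1}. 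This second ingredient is what your proposal is missing.
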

By scaling each cost function, we assume that $c_i(e)\in \{\varepsilon_i, 1\}$ for any $i\in N (=[3]) \ {\rm and}\ e\in M$, where $\varepsilon_i\in [0, 1)$.
We first give some definitions.
We call chore $e\in M$ a {\it consistently large chore} if $c_i(e)=1$ for all $i\in N$, and a {\it consistently small chore} if $c_i(e)=\varepsilon_i$ for all $i\in N$.
We call chore $e\in M$ a~{\it large} (resp.~{\it small}) {\it chore only for one agent} if $c_i(e)=1$ (resp.~$c_i(e)=\varepsilon_i$) for some agent $i$ and $c_j(e)=\varepsilon_j$ (resp.~$c_j(e)=1$) for any other agents $j\in N\setminus i$.
Note that every chore can be categorized into one of the following four types: consistently large, consistently small, large only for one agent, and  small only for one agent.
\paragraph{Round-Robin Algorithm}
In our algorithm, we use the round-robin algorithm as in~\cite{zhou2021approximately} in several parts.
See Algorithm~\ref{alg:04} for the description of the round-robin algorithm.
Note that the algorithm is now described for general $n$, while it will be used for $n=3$ in this section.  
The algorithm takes as input an ordering of agents $(\sigma_1, \sigma_2, \ldots, \sigma_n)$, a set of chores, and a cost function for each agent.
In the order of $\sigma$, the agents choose the minimum cost chore for her perspective until all chores are allocated. 
We index the rounds by $1,\ldots, m$, where exactly one chore is allocated in each round.
For every agent $i$, denote by $r_i$ the last round in which agent $i$ received a chore.
To simplify the notation, let $r_i=0$ if $i$ received no chore. 
Note that each $r_i$ depends on the ordering $\sigma$ of the agents. 
We call the output of Algorithm~\ref{alg:04} the {\it round-robin allocation with respect to $\sigma$}.
If $\sigma$ is not specified, it is simply called a {\it round-robin allocation}. 

\begin{algorithm}[htb]
\caption{ Round-Robin Algorithm}
\label{alg:04}
\begin{algorithmic}[1]
\Require an ordering of the agents $(\sigma_1,\ldots, \sigma_n)$, a set of chores $M$, and a cost function $c_i$ for each $i \in N$.
\State Initialize $A_i\leftarrow \emptyset$ for all $i\in N$, $R\leftarrow M$, and $i\leftarrow 1$
\While{$R\neq \emptyset$}
    \State Pick up $e \in \argmin_{e'\in R}\{c_{\sigma_i}(e')\}$
    \State $A_{\sigma_i}\leftarrow A_{\sigma_i}\cup e$, $R\leftarrow R\setminus e$
    \State Set $i\leftarrow (i~~{\rm mod}\ n)+1$
\EndWhile
\State \Return $A=(A_1,\ldots, A_n)$
\end{algorithmic}
\end{algorithm}

Note that for a round-robin allocation $A$, it holds that $||A_i|-|A_j||\le 1$ for any $i,j\in N$.
The following lemma is a fundamental property of the output of Algorithm~\ref{alg:04}.
\begin{lemma}[Lemma 5.3.~in~\cite{zhou2021approximately}]\label{lem:ef1}
Let $A$ be an allocation obtained by Algorithm~\ref{alg:04}. For any distinct agents $i, j$ with $r_i<r_j$, we have  
$c_i(A_i)\le c_i(A_j)$ and $c_j(A_j\setminus e)\le c_j(A_i)$ for some $e\in A_j$.
\end{lemma}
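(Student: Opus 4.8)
\textbf{Proof plan for Lemma~\ref{lem:ef1}.}

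The plan is to trace through the round-robin procedure and compare the bundles of the two agents $i$ and $j$ round by round, using the fact that in each of her turns an agent grabs a cheapest remaining chore. Fix distinct agents $i,j$ with $r_i<r_j$; in particular $A_j\neq\emptyset$, so statements about ``some $e\in A_j$'' make sense. Since the cost functions are additive, it suffices to pair up the chores each agent receives across consecutive rounds of the cycle and bound the cost of each of $i$'s chores by the cost of a corresponding chore of $j$ (and vice versa, with a one-round shift).

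First I would prove $c_i(A_i)\le c_i(A_j)$. Write the chores that $i$ picks, in order, as $e_i^1,e_i^2,\ldots$ and those that $j$ picks as $e_j^1,e_j^2,\ldots$. Because $\sigma$ cycles through the agents in a fixed order and $r_i<r_j$, agent $i$ gets her $k$-th chore in an earlier round than $j$ gets her $k$-th chore, for every $k$ for which $i$ has a $k$-th chore; moreover $i$ and $j$ receive chores in the same set of cycles up to the last cycle in which $i$ participates. When $i$ picks $e_i^k$, the chore $e_j^k$ is still available (it is picked by $j$ later), so by the greedy choice $c_i(e_i^k)\le c_i(e_j^k)$. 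Summing over all $k$ with $1\le k\le |A_i|$ and using additivity gives $c_i(A_i)=\sum_k c_i(e_i^k)\le \sum_k c_i(e_j^k)\le c_i(A_j)$, the last inequality because monotonicity/additivity means $A_j$ contains at least the chores $e_j^1,\ldots,e_j^{|A_i|}$ (indeed $|A_j|\ge|A_i|$ since $r_i<r_j$).

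Next I would prove $c_j(A_j\setminus e)\le c_j(A_i)$ for a suitable $e\in A_j$. Here the natural choice is to let $e=e_j^{\,t}$ be the \emph{last} chore $j$ receives, i.e.\ the one picked in round $r_j$. For each $k\ge 1$, when $j$ picks $e_j^{k+1}$ (in some cycle after the one in which $i$ picked $e_i^{k}$), the chore $e_i^{k}$ has already been taken, but we compare instead against the round in which $j$'s pick is constrained: when $j$ picks $e_j^{k+1}$, every chore $i$ picks from her $(k{+}1)$-st turn onward is still available — in particular if $i$ has a $(k{+}1)$-st chore $e_i^{k+1}$ then $c_j(e_j^{k+1})\le c_j(e_i^{k+1})$. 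Thus the chores $e_j^2,e_j^3,\ldots$ of $j$ (that is, $A_j\setminus e_j^1$) are dominated one-for-one in $c_j$-cost by $e_i^1,e_i^2,\ldots$, the chores of $i$. Summing, $c_j(A_j\setminus e_j^{1})\le c_j(A_i)$. (One must check the bookkeeping when $|A_j|=|A_i|+1$ versus $|A_j|=|A_i|$, and whether it is $e_j^1$ or $e_j^{r_j}$ that should be dropped; since the statement only asks for \emph{some} $e\in A_j$, dropping the first-picked chore of $j$ works in all cases, and the shifted pairing $e_j^{k+1}\leftrightarrow e_i^{k}$ covers exactly $|A_j|-1$ chores on each side because $|A_i|\ge|A_j|-1$.)

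The main obstacle is purely combinatorial index-chasing: correctly aligning the rounds so that the ``still available at the time of the pick'' argument is valid for each paired chore, and handling the two size regimes $|A_j|=|A_i|$ and $|A_j|=|A_i|+1$ uniformly. There is no analytic difficulty — everything reduces to the greedy inequality $c_\sigma(\text{picked chore})\le c_\sigma(\text{any still-available chore})$ plus additivity — but the proof has to be written so that the off-by-one shift in the second inequality is transparent. I would present it by explicitly numbering the rounds $1,\ldots,m$ as in the statement, letting $t_i^1<t_i^2<\cdots$ and $t_j^1<t_j^2<\cdots$ be the rounds in which $i$ resp.\ $j$ receive chores, observing $t_i^k<t_j^k$ and $t_j^k<t_i^{k+1}$ from the cyclic order together with $r_i<r_j$, and then reading off the two chains of inequalities.
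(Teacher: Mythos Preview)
The paper does not give its own proof of this lemma; it is quoted verbatim from \cite{zhou2021approximately}, so there is nothing in the paper to compare against. Your overall strategy---pair the picks of $i$ and $j$ and invoke the greedy inequality $c_\sigma(\text{picked})\le c_\sigma(\text{still available})$---is the standard one, but the concrete inequalities you write down do not all hold.

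The central claim ``$r_i<r_j$ implies $t_i^k<t_j^k$ for every $k$'' is false. Take $n=3$, $m=4$, $\sigma=(1,2,3)$: the rounds are $1,2,3,4$ with picks by agents $1,2,3,1$, so $r_1=4$, $r_2=2$, $r_3=3$. For $i=2$, $j=1$ we have $r_i=2<r_j=4$, yet $t_j^1=1<2=t_i^1$. What $r_i<r_j$ actually forces is a dichotomy: either (A) $i$ precedes $j$ in $\sigma$ and $|A_i|=|A_j|$, or (B) $j$ precedes $i$ in $\sigma$ and $|A_j|=|A_i|+1$. Your pairing $e_i^k\leftrightarrow e_j^k$ proves the first inequality only in case~(A); in case~(B) one must pair $e_i^k\leftrightarrow e_j^{k+1}$, which exists precisely because $|A_j|=|A_i|+1$. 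A case-free way to say both at once: for each $k\le|A_i|$ the interval $(t_i^k,t_i^k+n)$ contains exactly one pick of $j$, and since $t_i^k\le r_i<r_j$ that pick is a genuine one; pair $e_i^k$ with it.

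The second half is also internally inconsistent. You first declare $e$ to be the \emph{last} chore $j$ receives, then argue $c_j(e_j^{k+1})\le c_j(e_i^{k+1})$ (which needs $t_j^{k+1}<t_i^{k+1}$, i.e.\ case~(B) only), and conclude by removing $e_j^1$, the \emph{first} chore. The last chore $e_j^{|A_j|}$ is the right one to drop in both cases: for each $k\le|A_j|-1$ the interval $(t_j^k,t_j^{k+1})$ contains exactly one pick of $i$, and this pick is real because $|A_j|-1\le|A_i|$; pairing $e_j^k$ with it gives $c_j(A_j\setminus e_j^{|A_j|})\le c_j(A_i)$. In short, the two observations you propose to read off at the end, $t_i^k<t_j^k$ and $t_j^k<t_i^{k+1}$, are not both consequences of $r_i<r_j$; only the second one is.
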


\paragraph{Overview}
Our algorithm to find an EFX allocation is described in Algorithm~\ref{alg:03}.
Our basic idea is to recursively compute an EFX allocation with respect to the number of chores. 
If $|M| \le 3$, then the algorithm returns an allocation in which every agent receives at most one chore.
Otherwise, we deal with the following two cases separately. 

Suppose first that there exists a chore $e$ which is consistently small or large only for one agent. 
In this case, 
we compute an EFX allocation $A$ to $N$ of $M \setminus e$ recursively. 
By regarding $A$ as an allocation to $U$, we construct the EFX-graph $G_A$. 
We show that $e$ can be added to a certain vertex $u^*\in U$ so that the resulting EFX-graph has a perfect matching.
The algorithm computes a perfect matching in the new EFX-graph, which gives an EFX allocation to $N$ of $M$.  

Suppose next that there exists neither consistently small chore nor large chore only for one agent, that is, 
the instance consists of consistently large chores and small chores only for one agent.
If all chores are consistently large, then a round-robin allocation gives an EFX allocation in this case. 
If there is a small chore $e$ only for one agent (say agent $1$) and there are no small chores only for agent 2 or 3, then  
the algorithm computes a round-robin allocation $(S_1, S_2, S_3)$ on the instance with $e$ removed such that $r_1< \min\{r_2, r_3\}$.
In this case, we can show that $(S_1\cup e, S_2, S_3)$ is an EFX-allocation of $M$. 
If there are a small chore $e$ only for one agent (say agent $1$) and a small chore $e'$ for another agent (say agent $2$), then 
the algorithm computes the round-robin allocation $(S_1, S_2, S_3)$ on the instance with $e$ and $e'$ removed such that $r_1< r_2 < r_3$.
In this case, we can show that $(S_1\cup e, S_2 \cup e', S_3)$ is an EFX-allocation of $M$. 

\begin{algorithm}[htb]
\caption{Case when $n=3$ and each agent has a personalized bi-valued cost function.}
\label{alg:03}
\begin{algorithmic}[1]
\Procedure{EFX}{$N$, $M$, $\{c_i\}_{i \in N}$}
\If{$|M| \le 3$}
    \State \Return an allocation in which every agent receives at most one chore. 
\Else
    \If{there exists a chore $e$ which is consistently small or large only for one agent}
        \State $A\leftarrow$ {\sc EFX}($N$, $M\setminus e$, $\{c_i\}_{i\in N}$)
        \State Regard $A$ as an allocation to $U$ with $|U|=3$.
        \State Find a vertex $u^*\in U$ such that $G_{A'}$ has a perfect matching,
        \State where $A'$ is the allocation to $U$ obtained from $A$ by adding $e$ to $A_{u^*}$.
        \State $A_{u^*} \leftarrow A_{u^*} \cup e$
        \State Find a perfect matching $X$ on $G_{A}$.
        \State Construct the allocation $A^*$ by allocating each chore set to the matched agent in $X$.
        \State \Return $A^*$
    
    \ElsIf{there are no small chores only for one agent in $M$}
            \State \Return a round-robin allocation $A^*$ of $M$ with respect to an arbitrary ordering
        \Else
            \State We assume that there exists $e\in M$ such that $c_1(e)=\varepsilon_1$ and $c_2(e)=c_3(e)=1$, 
            \State renumbering if necessary.
            \If{there are no small chores only for agent $2$ or $3$ in $M\setminus e$}
                \State Compute a round-robin allocation $S=(S_1, S_2, S_3)$ on $M\setminus e$ s.t. $r_1< \min \{r_2, r_3 \}$.
                \State Set $A^*_1\leftarrow S_1\cup e$, $A^*_2\leftarrow S_2$, $A^*_3\leftarrow S_3$.
                \State \Return $A^*$
            \Else
                \State We assume that there exists $e'\in M\setminus e$ s.t. $c_1(e')=c_3(e')=1$ and $c_2(e')=\varepsilon_2$.
                \State Compute a round-robin allocation $S=(S_1, S_2, S_3)$ on $M\setminus \{e, e'\}$ s.t. $r_1 < r_2 < r_3$.
                \State Set $A^*_1\leftarrow S_1\cup e$, $A^*_2\leftarrow S_2\cup e'$, $A^*_3\leftarrow S_3$.
                \State \Return $A^*$
            \EndIf 
    \EndIf
\EndIf
\EndProcedure
\end{algorithmic}
\end{algorithm}
\begin{proof}[Proof of Theorem~\ref{thm:bivalued}]
We first show the correctness of Algorithm~\ref{alg:03} by induction on $m$.
For the base case of $m\le 3$, the algorithm trivially returns an EFX allocation.
For the inductive step, we assume that the algorithm returns an EFX allocation when the number of chores is less than $m$.
\begin{description}
    \item[Case 1:] There exists a chore $e$ which is consistently small or large only for one agent in $M$.\\
    Let $A=(A_i)_{i\in N}$ be an EFX allocation of $M\setminus e$, whose existence is guaranteed by the induction hypothesis. 
    We regard $A$ as an allocation to $U$ by using an arbitrary bijection between $N$ and $U$.
    Note that EFX-graph $G_A$ has a perfect matching since $A=(A_i)_{i\in N}$ is an EFX allocation.
    \begin{description}
        \item[Case 1-1:] $e$ is a consistently small chore.\\
        By Observation~\ref{ob:simple} (iii), there exists a perfect matching $X$ in $G_A$ such that $(i, X(i))\in E^{\min}_{A}$ for some $i\in N$.
        Since $e$ is a consistently small chore, we have $c_i(e)=\varepsilon_i \le c_i(e')$ for any $e'\in M\setminus e$.
        Thus, by applying Corollary~\ref{cor:1}, 
        $G_{A'}$ has a perfect matching $X'$, where $A'$ is the allocation to $U$ of $M$ obtained from $A$ by adding $e$ to $A_{X(i)}$.
        \item[Case 1-2:] $e$ is a large chore only for one agent.\\
        We assume that $c_1(e)=\varepsilon_1, c_2(e)=\varepsilon_2$, and $c_3(e)=1$, renumbering if necessary.
        This means that $c_i(e)\le c_i(e')$ for any $i\in \{1,2\}$ and $e'\in M\setminus e$.
        Thus, by applying Lemma~\ref{cl:add} for $M'=M\setminus e$, there exists a vertex $u^*\in U$ such that $G_{A'}$ has a perfect matching, where $A'$ is the allocation to $U$ of $M$ obtained from $A$ by adding $e$ to $A_{u^*}$.
    \end{description}
    In both cases, $G_{A'}$ has a perfect matching.
    Therefore, the algorithm returns an EFX allocation by Observation~\ref{ob:final}. 

\begin{figure}[t]
 \begin{tabular}{ccc}
 \begin{minipage}[b]{0.3\hsize}
  \begin{center}
     \begin{tabular}{c|cccc}
                        & $e_1$ & $e_2$ & $e_3$ &$\cdots$ \\ \hline
                agent 1 & 1     & 1     & 1 & $\cdots$\\
                agent 2 & 1     & 1     & 1 & $\cdots$\\
                agent 3 & 1     & 1     & 1 & $\cdots$\\
        \end{tabular}
    \tblcaption{Situation in Case 2}
    \label{tab:1}
   \end{center}
 \end{minipage}
 \hfill
\begin{minipage}[b]{0.3\hsize}
  \begin{center}
    \begin{tabular}{c|cccc}
                        & $e$ & $\cdots$ &  & \\ \hline
                agent 1 & $\varepsilon_1$     &  &  &$\cdots$ \\
                agent 2 & 1     & 1     & 1 & $\cdots$\\
                agent 3 & 1     & 1     & 1 & $\cdots$\\
            \end{tabular}
    \tblcaption{Situation in Case 3-1}
    \label{tab:2}
   \end{center}
 \end{minipage}
 \hfill
\begin{minipage}[b]{0.3\hsize}
  \begin{center}
    \begin{tabular}{c|ccc}
                        & $e$ & $e'$ & $\cdots$ \\ \hline
                agent 1 & $\varepsilon_1$     &  1    &$\cdots$\\
                agent 2 & 1     & $\varepsilon_2$     &$\cdots$\\
                agent 3 & 1     & 1     & $\cdots$ \\
            \end{tabular}
    \tblcaption{Situation in Case 3-2}
    \label{tab:3}
   \end{center}
 \end{minipage}
 \end{tabular}
\end{figure}
    \item[Case 2:] All chores in $M$ are consistently large (see Table~\ref{tab:1}).\\
    In this case, any round-robin allocation $A^*$ gives an EFX allocation since it satisfies that $||A^*_i|-|A^*_j||\le 1$ for any $i,j\in N$.

    \item[Case 3:] Otherwise.\\
    In this case, $M$ consists of consistently large chores and small chores only for one agent, and there is at least one small chore only for one agent.
    We assume that there exists $e\in M$ such that $c_1(e)=\varepsilon_1$ and $c_2(e)=c_3(e)=1$, renumbering if necessary.
    \begin{description}
        \item[Case 3-1:] There are no small chores only for agent $2$ or $3$ in $M\setminus e$ (see Table~\ref{tab:2}).\\
        In this case, all chores in $M\setminus e$ are consistently large or small only for agent $1$.
        This means that $c_2(e')=c_3(e')=1$ for any $e'\in M\setminus e$.
        Let $S=(S_1, S_2, S_3)$ be a round-robin allocation of $M\setminus e$ such that $r_1< \min\{ r_2, r_3\}$.
        Note that we can achieve $r_1< \min\{ r_2, r_3\}$ by choosing the ordering $\sigma$ of the agents appropriately, 
        because $M \setminus e$ contains at least three chores. 
        We show that $A^*=(A^*_1, A^*_2, A^*_3)$ is an EFX allocation of $M$, where $A^*_1= S_1\cup e$, $A^*_2= S_2$, $A^*_3= S_3$.
        By Lemma~\ref{lem:ef1}, we have $c_1(S_1)\le \min \{ c_1(S_2), c_1(S_3) \}$.
        Thus, it holds that $\max_{f\in A^*_1} c_1(A^*_1 \setminus f)=c_1(S_1) \le \min \{ c_1(S_2), c_1(S_3) \} = \min \{ c_1(A^*_2), c_1(A^*_3) \}$, which implies that agent $1$ does not strongly envy anyone.
        We also see that $\max_{f\in A^*_2}c_2(A^*_2\setminus f)=  |S_2|-1 \le |S_3|=c_2(A^*_3)$ and $\max_{f\in A^*_3}c_3(A^*_3\setminus f)= |S_3|-1 \le |S_2|=c_3(A^*_2)$, because $||S_2|-|S_3||\le 1$. 
        Thus, agents $2$ and $3$ do not strongly envy each other.
        Finally, for $i\in \{2,3\}$,  we have $ c_i(A^*_i)=|S_i| \le |S_1|+1=c_i(A^*_1\cup e)$, which implies that agents $2$ and $3$ do not envy agent $1$.
        Therefore, $A^*=(A^*_1, A^*_2, A^*_3)$ is an EFX allocation of $M$.
        \item[Case 3-2:] There is a small chores only for agent $2$ or $3$ in $M\setminus e$.\\
        In this case, we assume that there exists $e'\in M\setminus e$ such that $c_1(e')=c_3(e')=1$ and $c_2(e')=\varepsilon_2$, renumbering if necessary (see Table~\ref{tab:3}).
        Let $S=(S_1, S_2, S_3)$ be a round-robin allocation on $M\setminus \{e, e'\}$ such that $r_1 < r_2 < r_3$.
        Note that we can achieve $r_1 < r_2 < r_3$ by choosing the ordering $\sigma$ of the agents appropriately, 
        because $M \setminus \{e, e'\}$ contains at least two chores. 
        We show that $A^*=(A^*_1, A^*_2, A^*_3)$ is an EFX allocation of $M$, where $A^*_1= S_1\cup e$, $A^*_2= S_2\cup e'$, $A^*_3= S_3$.
        We have $\max_{f\in A^*_1} c_1(A^*_1 \setminus f)=c_1(S_1)\le \min \{c_1(S_2), c_1(S_3) \} \le \min \{c_1(A^*_2), c_1(A^*_3) \}$ by Lemma~\ref{lem:ef1}, which implies that agent 1 does not strongly envy anyone.
        We also have $\max_{f\in A^*_2} c_2(A^*_2 \setminus f)=c_2(S_2)\le \min \{c_2(S_1)+1, c_2(S_3)\}=\min \{c_2(A^*_1), c_2(A^*_3)\}$, where the inequality follows from Lemma~\ref{lem:ef1}.
        This means that agent 2 does not strongly envy anyone.
        Finally, we have $c_3(A^*_3) = c_3(S_3) \le \min \{c_3(S_1)+1, c_3(S_2)+1\}=\min\{c_3(A^*_1), c_3(A^*_2)\}$, where the inequality follows from Lemma~\ref{lem:ef1}.
        This means that agent 3 does not envy anyone.
        Therefore, $A^*=(A^*_1, A^*_2, A^*_3)$ is an EFX allocation of $M$.
    \end{description}
\end{description}
We next show that Algorithm~\ref{alg:03} runs in polynomial time.
Each if-statement can be checked in $O(m)$ time.
Finding a perfect matching in $G_{A'}$ or constructing an allocation from a perfect matching can be done in polynomial time.
The round-robin algorithm also runs in polynomial time.
Therefore, Algorithm~\ref{alg:03} runs in polynomial time. 
\end{proof}

\section*{Acknowledgments}
This work was partially supported by the joint project of Kyoto University and Toyota Motor Corporation, titled ``Advanced Mathematical Science for Mobility Society'' and by JSPS KAKENHI Grant Number
JP20K11692. 

\bibliographystyle{plain}
\bibliography{main}

\end{document}